\newtheorem{theorem}{Theorem}
\newtheorem{corollary}{Corollary}[theorem]
\newtheorem{lemma}{Lemma}
\newtheorem{definition}{Definition}
\newtheorem{proposition}{Proposition}
\newtheorem{cex}{Counter-Example}
\begin{document}


\title{Necessary and sufficient condition for the existence of Schmidt decomposition in multipartite Hilbert spaces}


\author{Spandan Das}
\email{spandandas94@gmail.com}
\affiliation{Department of Mathematics and Statistics, Indian Institute of Science Education and Research (IISER), Kolkata, India}
\author{Goutam Paul}
\email{goutam.paul@isical.ac.in}
\affiliation{Cryptology and Security Research Unit (CSRU), R. C. Bose Centre for Cryptology and  Security, Indian Statistical Institute, Kolkata, India}



\begin{abstract}
Pati (Physics Letters A, 2000) derived a sufficient condition for the existence of Schmidt decomposition in tripartite Hilbert spaces. In this paper, we show that the condition is erroneous by demonstrating some counter-examples. Moreover, we suitably modify the condition and provide a correctness proof. We also show for the first time how this can be generalized to $n$-partite Hilbert spaces for any $n\geq 2$. Finally, we prove that this condition is also a necessary condition.
\end{abstract}


\maketitle


\section{Introduction}
Schmidt decomposition was originally defined for pure states in a bipartite Hilbert space; since for any pure state $\ket{x}$ in a two-part system $A\otimes B$, we can always find orthonormal bases $\{\ket{i_A}\}_{i=1}^{dim(A)}\subseteq A$ and $\{\ket{i_B}\}_{i=1}^{dim(B)}\subseteq B$ such that $\ket{x}$ can be written as,
\begin{equation*}
\ket{x} = \sum_{i=1}^n \lambda_i \ket{i_A}\ket{i_B},
\end{equation*}
where $n=min\{dim(A),dim(B)\}$ and $\lambda_i \in \mathbb{C}$ \cite{2,3,4}. Schmidt decomposition is extremely helpful for bipartite systems since it reduces the number of terms needed to represent a pure state through orthogonal bases from $dim(A)\times dim(B)$ to just $min\{dim(A),dim(B)\}$. Thus all the information contained in the pure state can be confined in a much smaller space. This is why Schmidt decomposition has enormous application in the study of Quantum Information. It would be tremendously helpful if Schmidt decomposition existed for general $n$-partite system $\otimes_{i=1}^n A_i$, simply because, we would have to deal with just $min\{dim(A_i)|i=1,\dots,n\}$ terms instead of $\prod_{i=1}^n dim(A_i)$ terms then. But sadly this is not true. Schmidt decomposition woefully fails for just tripartite systems \cite{2,5}. But certainly, some pure states in a general $n$-partite Hilbert space are Schmidt decomposable. 

Pati derived a sufficient condition for the existence of Schmidt decomposition in tripartite Hilbert spaces~\cite{1}, but the condition was erroneous, as we have proved in this article. We also showed that with a little modification, the result holds not only for tripartite systems, but for any $n$-partite system in general; and it is actually a necessary and sufficient condition.\par

A different direction of research is to find the equivalence class of a pure state in a composite quantum system \cite{6,7,8,9,10,11,12,13,14,15,16,17,18}, where two pure states are called equivalent if they are related by a unitary transformation which factorises into separate unitary transformations on the component parts (i.e., a \textit{local} unitary transformation). A remarkable work in this direction is due to Carteret, Higuchi and Sudbery~\cite{13}, where a nice way of describing any pure state in a composite system through orthonormal bases of the component systems is derived. An interesting property of this canonical form is that the coefficient of each term is either $0$ or real. This property is also seen in the Schmidt decomposition of a pure state in a bipartite system. Hence the authors have named the canonical form as ``generalised Schmidt decomposition''. But this representation is clearly different from what we would understand as the Schmidt decomposition of a pure state. 

It is to be noted that, by our definition, not every pure state in an $n$-partite system ($n>2$) has Schmidt decomposition, but, by the definition of the paper \cite{13}, every pure state in a general $n$-partite system has it. Thus to avoid confusion, we have provided definitions of certain terms in order to clarify what we have meant by them.\par

For the rest of this article, $\delta_{ij}$ is defined as,
\begin{eqnarray}\label{delta}
&\delta_{ij}=
\begin{cases}
1;\text{ if }i=j,\\
0;\text{ otherwise.}
\end{cases}
\end{eqnarray}

\section{Useful Definitions}
\begin{definition}\label{nph}\textbf{n-partite Hilbert space}\\
	A system $A$ is called an \textbf{n-partite Hilbert space} if it is a tensor product of $n$ finite dimensional Hilbert spaces $A_1, A_2,\dots,A_n$ and inner product on $A$, i.e., $Ip : A \times A \longrightarrow \mathbb{C}$ is defined as,
	\begin{align*}
	& Ip(\sum_{i_1,\dots,i_n}a_{i_1,\dots,i_n}\otimes_{k=1}^{n}\ket{v_{i_k}^{A_k}} , \sum_{j_1,\dots,j_n}b_{j_1,\dots,j_n}\otimes_{k=1}^{n}\ket{w_{j_k}^{A_k}})\nonumber\\
	=& \sum_{i_1,\dots,i_n}\sum_{j_1,\dots,j_n} a_{i_1,\dots,i_n}^*b_{j_1,\dots,j_n}\prod_{k=1}^{n}\braket{v_{i_k}^{A_k}|w_{j_k}^{A_k}}\nonumber,
	\end{align*}
	where $\ket{v_{i_k}^{A_k}}, \ket{w_{j_k}^{A_k}} \in A_k \quad \forall i_k,j_k$; $k=1,\dots,n$.
\end{definition}

\begin{definition}\label{pip}\textbf{Partial inner product}\\
	Let $A_1,A_2,\dots,A_n$ be finite dimensional Hilbert spaces. Also let $\ket{x} =\sum_{i_1,\dots,i_n}a_{i_1,\dots,i_n}\ket{w_{i_1}^{A_1}}\dots\ket{w_{i_n}^{A_n}} \in \otimes_{j=1}^{n}A_j$ $(\ket{w_{i_j}^{A_j}} \in A_j \quad \forall j)$ and $\ket{v^{A_k}} \in A_k$ be pure states. Then the \textbf{partial inner product} $(_{A_k}\braket{|})$ of  $\ket{v^{A_k}}$ and $\ket{x}$ is an operator,
	$$
	_{A_k}\braket{|} : A_k \times \otimes_{j=1}^{n}A_j, \longrightarrow \mathop{\otimes_{j=1}^{n}}_{j\neq k}A_j,
	$$
	such that
	\begin{align*}
	&_{A_k}\braket{v^{A_k}|\sum_{i_1,\dots,i_n}a_{i_1,\dots,i_n}\ket{w_{i_1}^{A_1}}\dots\ket{w_{i_n}^{A_n}}}\\ =&\sum_{i_1,\dots,i_n}a_{i_1,\dots,i_n}\braket{v^{A_k}|w_{i_k}^{A_k}}\mathop{\prod_{j=1}^n}_{j\neq k}\ket{w_{i_j}^{A_j}}.
	\end{align*}
\end{definition}

\begin{definition}\label{pss}\textbf{Partially separable state}\\
	Let $A_1,A_2,\dots,A_n$ be finite dimensional Hilbert spaces. A pure state $\ket{x} \in \otimes_{j=1}^{n}A_j$ is called a \textbf{partially separable state} if it can be written in the form
	\begin{align*}
	&\ket{x} = \ket{\psi}_{A_1,A_2,\dots,A_{k-1}} \otimes \ket{\phi}_{A_{k},A_{k+1},\dots,A{n}}\\
	&\text{for some}\quad k \in \{2,\dots,n\},
	\end{align*}
	where $\ket{\psi}_{A_1,A_2,\dots,A_{k-1}} \in \otimes_{j=1}^{k-1}A_j$ and $\ket{\phi}_{A_{k},A_{k+1},\dots,A{n}} \in \otimes_{j=k}^{n}A_j$ are pure states.
\end{definition}

\begin{definition}\label{css}\textbf{Completely separable state}\\
	Let $A_1,A_2,\dots,A_n$ be finite dimensional Hilbert spaces. A pure state $\ket{x} \in \otimes_{j=1}^{n}A_j$ is called a \textbf{completely separable state} if it can be written in the product form, i.e.,
	\begin{equation*}
	\ket{x} = \otimes_{j=1}^{n}\ket{\psi^{A_j}},
	\end{equation*}
	where $\ket{\psi^{A_j}} \in A_j$ for each $j=1,\dots,n$.
\end{definition}

\section{Analysis of an existing result}
A paper by Pati~\cite{1} described a single theorem for existence of Schmidt decomposition of a pure state in a tripartite Hilbert space. The Theorem of Pati~\cite{1} is stated below.

\begin{proposition}\label{wrth}
	For any state $\ket{\psi}_{ABC} \in \mathbb{H}_A \otimes \mathbb{H}_B \otimes \mathbb{H}_C$ of a tripartite system (Def[\ref{nph}]), let dim $\mathbb{H}_A = N_A$ is the smallest of $N_A,N_B,N_C$. If the ``partial inner product" (Def[\ref{pip}]) of the basis $\ket{u_i}_A$ with the state $\ket{\psi}_{ABC}$, i.e., $_A\braket{u_i|\psi}_{ABC} = \ket{\psi_i}_{BC}$ has Schmidt number (Def[\ref{sn}]) one, then the Schmidt decomposition for a tripartite system exists.
\end{proposition}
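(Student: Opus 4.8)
The plan is to build the candidate tripartite Schmidt decomposition directly from the hypothesis and then verify that it genuinely satisfies the orthonormality requirement in all three factors. First I would take the orthonormal basis $\{\ket{u_i}_A\}_{i=1}^{N_A}$ of $\mathbb{H}_A$ appearing in the statement and expand $\ket{\psi}_{ABC}$ through the partial inner products (Def[\ref{pip}]): since $\sum_i\ket{u_i}_A\,{}_A\braket{u_i|\,\cdot\,}$ acts as the identity on the $A$-slot, one immediately gets $\ket{\psi}_{ABC}=\sum_{i=1}^{N_A}\ket{u_i}_A\otimes\ket{\psi_i}_{BC}$ with $\ket{\psi_i}_{BC}={}_A\braket{u_i|\psi}_{ABC}$. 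By the hypothesis each $\ket{\psi_i}_{BC}$ has Schmidt number one (Def[\ref{sn}]), so it is a product state $\ket{\psi_i}_{BC}=\lambda_i\,\ket{b_i}_B\otimes\ket{c_i}_C$ for unit vectors $\ket{b_i}_B\in\mathbb{H}_B$, $\ket{c_i}_C\in\mathbb{H}_C$ and scalars $\lambda_i\ge 0$. Substituting yields the target form
\[
\ket{\psi}_{ABC}=\sum_{i=1}^{N_A}\lambda_i\,\ket{u_i}_A\ket{b_i}_B\ket{c_i}_C ,
\]
in which the $A$-family is orthonormal by construction.

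What remains — and this is the entire content of the theorem — is to show that the families $\{\ket{b_i}_B\}$ and $\{\ket{c_i}_C\}$, restricted to the indices with $\lambda_i\neq 0$, can each be taken orthonormal, so that the display above is a bona fide Schmidt decomposition. My main lever would be the bipartite Schmidt decomposition across the cut $A\mid BC$: under the intended reading $\{\ket{u_i}_A\}$ is the Schmidt basis of that cut, which makes the $\{\ket{\psi_i}_{BC}\}$ mutually orthogonal, i.e. $\braket{\psi_i|\psi_j}=\lambda_i\lambda_j\braket{b_i|b_j}_B\braket{c_i|c_j}_C=0$ for $i\neq j$. Complementarily, I would compute the reduced operators $\rho_B=\mathrm{Tr}_{AC}\ket{\psi}\bra{\psi}=\sum_i\lambda_i^2\,\ket{b_i}\bra{b_i}$ and $\rho_C=\sum_i\lambda_i^2\,\ket{c_i}\bra{c_i}$, and note that in a genuine Schmidt decomposition the $\ket{b_i}$ and $\ket{c_i}$ must be eigenvectors of $\rho_B$ and $\rho_C$ and hence mutually orthogonal; so matching the spectra of $\rho_A$, $\rho_B$, $\rho_C$ (all of which ought to be $\{\lambda_i^2\}$) is equivalent to the orthogonality I am after.

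The hard part will be exactly this orthogonality step. The relation $\braket{b_i|b_j}_B\braket{c_i|c_j}_C=0$ coming from the $A\mid BC$ cut only asserts that, for each pair $i\neq j$, at least one of the two overlaps vanishes; it does not by itself force all the $B$-overlaps and all the $C$-overlaps to vanish simultaneously, which is what orthonormality of both families demands. To close this gap I would try to upgrade the pairwise disjunctive condition to a global one: for example, organise the indices according to which factor carries the vanishing overlap and argue inductively on $N_A$, or exploit the minimality of $N_A$ together with the requirement that $\rho_A$, $\rho_B$, $\rho_C$ share the same nonzero spectrum to rule out configurations in which some $B$-overlaps and some $C$-overlaps stay nonzero. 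I expect this to be the decisive and most delicate point, and the success of the whole argument hinges on whether the product-state hypothesis on each $\ket{\psi_i}_{BC}$ — combined with orthonormality of the $A$-family — can indeed be pushed to enforce the simultaneous orthogonality of the $B$- and $C$-families.
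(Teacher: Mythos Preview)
The proposition you are trying to prove is actually \emph{false}, and the paper's purpose in stating it is precisely to refute it. The ``hard part'' you correctly isolate --- upgrading the disjunctive condition $\braket{b_i|b_j}_B\braket{c_i|c_j}_C=0$ to simultaneous orthogonality of both the $B$- and $C$-families --- cannot be closed, and the paper gives an explicit counter-example. Take $\mathbb{H}_A=\mathbb{H}_B=\mathbb{H}_C$ two-dimensional with orthonormal basis $\{\ket{0},\ket{1}\}$ and
\[
\ket{\psi}_{ABC}=\ket{0}\ket{0}\ket{0}+\ket{1}\ket{1}\ket{0}.
\]
Here ${}_A\braket{0|\psi}=\ket{0}\ket{0}$ and ${}_A\braket{1|\psi}=\ket{1}\ket{0}$ are both product states (Schmidt number one), so the hypothesis holds. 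In your notation $\ket{b_0}=\ket{0}$, $\ket{b_1}=\ket{1}$, $\ket{c_0}=\ket{c_1}=\ket{0}$: the $B$-family is orthonormal but the $C$-family is not, and no change of basis repairs this. Indeed $\ket{\psi}_{ABC}=(\ket{00}+\ket{11})\otimes\ket{0}$ is partially but not completely separable, and the paper's Corollary~\ref{psc} shows such states never admit a tripartite Schmidt decomposition.

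There is also an earlier gap: you assume ``under the intended reading'' that $\{\ket{u_i}_A\}$ is the $A\mid BC$ Schmidt basis so that the $\ket{\psi_i}_{BC}$ are mutually orthogonal. The proposition as stated allows an arbitrary orthonormal basis of $\mathbb{H}_A$, and the paper's first counter-example ($\ket{\psi}=(\ket{0}+\ket{1})\ket{0}\ket{0}$ with basis $\{\ket{0},\ket{1}\}$) shows orthogonality of the $\ket{\psi_i}_{BC}$ can already fail. The paper's corrected version (Theorem~\ref{3p}) strengthens the hypothesis by requiring a suitable basis in \emph{each} of the three factors, not just in $A$, and it is only with that stronger assumption that the argument goes through.
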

\subsection{Summary of the proof of~\cite{1}}
	The Proposition[\ref{wrth}] is proved by noting that the state $\ket{\psi}_{ABC}$ can be written as,
	\begin{equation*}
	\ket{\psi}_{ABC} = \sum_i \ket{u_i}_A\ket{\psi_i}_{BC},
	\end{equation*}
	where $\ket{\psi_i}_{BC}$ is a pure state in $B\otimes C$. It is claimed that $\{\ket{\psi_i}_{BC}\}_i$ is an orthogonal set in $B\otimes C$, but not necessarily normalized. Further by assumption of the theorem, it is written,
	\begin{equation*}
	\ket{\psi_i}_{BC}=\ket{\beta_i}_B\ket{\gamma_i}_C,\quad \text{for each }i.
	\end{equation*}
	Then the reduced density matrices $\rho_A,\rho_B,\rho_C$ of each of the subsystems $A,B,C$ respectively, is calculated by taking partial traces on the remaining two subsystems and using the trace equalities $tr_C(\ket{\gamma_i}_{CC}\bra{\gamma_j}) = q_i\delta_{ij}$, where $q_i = ||\gamma_i||^2$; and $tr_B(\ket{\beta_i}_{BB}\bra{\beta_j}) = r_i\delta_{ij}$, where $r_i = ||\beta_i||^2$. Here $\delta_{ij}$ is taken as in equation(\ref{delta}). Thus,
	\begin{align*}
	\rho_A &= \sum_i q_ir_i\ket{u_i}_{AA}\bra{u_i},\\
	\rho_B &= \sum_i q_ir_i\ket{\beta^{\prime}_i}_{BB}\bra{\beta^{\prime}_i},\\
	\rho_C &= \sum_i q_ir_i\ket{\gamma^{\prime}_i}_{CC}\bra{\gamma^{\prime}_i},
	\end{align*}
	where $\ket{\beta^{\prime}_i}_B=\frac{\ket{\beta_i}_B}{\sqrt{r_i}}$ and $\ket{\gamma^{\prime}_i}_C=\frac{\ket{\gamma_i}_C}{\sqrt{q_i}}$. Thus $\{\ket{\beta^{\prime}_i}_B\}_i$ and $\{\ket{\gamma^{\prime}_i}_C\}_i$ are orthonormal sets in $B$ and $C$ respectively. Thus $\ket{\psi}_{ABC}$ has Schmidt decomposition since it can be written as,
	\begin{equation*}
	\ket{\psi}_{ABC} = \sum_i \sqrt{q_ir_i}\ket{u_i}_A\ket{\beta^{\prime}_i}_B\ket{\gamma^{\prime}_i}_C,
	\end{equation*}
	where $\{\ket{u_i}_A\}_i$, $\{\ket{\beta^{\prime}_i}_B\}_i$, $\{\ket{\gamma^{\prime}_i}_C\}_i$ are orthonormal sets in $A$, $B$, $C$ respectively.
	
\subsection{Errors in the proof of~\cite{1}}
The proof that has been provided in the Paper\cite{1} in support of the Proposition[\ref{wrth}], requires some conditions that are not correct. We have enlisted these arguments and provided counter-example for each of them.
\begin{description}
	\item[Error 1] Let $\ket{\psi}_{ABC} = \sum_{ijk}a_{ijk}\ket{u_i}_A\ket{v_j}_B\ket{w_k}_C$. The partial inner product of the basis element $\ket{u_i}$ and the state $\ket{\psi}_{ABC}$ is a vector $\ket{\psi_i}_{BC}$ in the Hilbert space $\mathbb{H}_B \otimes \mathbb{H}_C$, which is spanned by basis vectors $\{\ket{v_j}_B \otimes \ket{w_k}_C\}$. Then $\ket{\psi_i}_{BC}$ can be written as,
	\begin{equation*}
	\ket{\psi_i}_{BC} = \sum_{jk} a_{ijk}\ket{v_j}_B \otimes \ket{w_k}_C,
	\end{equation*}
	where $\{\ket{\psi_i}_{BC}\}_i$ is an orthogonal basis set but need not be normalised.

	The above claim is made by the author \cite{1} which need not be true provided the choice of basis $\{\ket{u_i}\}_i$ is arbitrary.

	\begin{cex}
	Let $\mathbb{H}_A = \mathbb{H}_B = \mathbb{H}_C = \mathbb{H}$ where $\mathbb{H}$ is a two dimensional Hilbert space. Let
	\begin{equation*}
	\ket{\psi}_{ABC} = (\ket{0}+\ket{1}) \otimes \ket{0} \otimes \ket{0},
	\end{equation*}
	where $\{\ket{0},\ket{1}\}$ is an orthonormal basis for $\mathbb{H}$. Now let us observe that,
	\begin{align*}
	& \ket{\psi_0}_{BC} = _A\braket{0|\psi}_{ABC} = \ket{0} \otimes \ket{0},\\
	& \ket{\psi_1}_{BC} = _A\braket{1|\psi}_{ABC} = \ket{0} \otimes \ket{0}.
	\end{align*}
	Thus $\{\ket{0},\ket{1}\}$ satisfies the condition mentioned in the Proposition[\ref{wrth}], since both $\ket{\psi_0}_{BC}$ and $\ket{\psi_1}_{BC}$ are product states and hence equivalently have Schmidt number 1 by Lemma[\ref{sh1iffpro}]. But clearly they are not orthogonal since,
	\begin{equation*}
	_{BC}\braket{\psi_0|\psi_1}_{BC} = \braket{0|0}\braket{0|0} = 1.
	\end{equation*}
	\end{cex}
	Thus the claim is not always true.
	
	\item[Error 2] Let us assume also, that $\{\ket{\psi_i}_{BC}\}_i$ is an orthogonal set. Now,
	\begin{equation*}
	\ket{\psi_i}_{BC} = \ket{\beta_i}_B \otimes \ket{\gamma_i}_C,
	\end{equation*}
	according to the assumption of the Proposition[\ref{wrth}]. Now the author has used the trace equality $tr_C(\ket{\gamma_i}_{CC}\bra{\gamma_j}) = _C\braket{\gamma_i|\gamma_j}_C = q_i\delta_{ij}$, where $q_i = ||\gamma_i||^2$. Here $\delta_{ij}$ is taken as in equation(\ref{delta}). This also is not necessarily true always. The following counter-example proves this.

	\begin{cex}
	Let $\mathbb{H}_A = \mathbb{H}_B = \mathbb{H}_C = \mathbb{H}$ where $\mathbb{H}$ is a two dimensional Hilbert space. Let
	\begin{equation*}
	\ket{\psi}_{ABC} = \ket{0} \otimes \ket{0} \otimes \ket{0} + \ket{1} \otimes \ket{1} \otimes \ket{0},
	\end{equation*}
	where $\{\ket{0},\ket{1}\}$ is an orthonormal basis for $\mathbb{H}$. Now let us observe that,
	\begin{align*}
	& \ket{\psi_0}_{BC} = _A\braket{0|\psi}_{ABC} = \ket{0} \otimes \ket{0} = \ket{\beta_0}_B\otimes\ket{\gamma_0}_C,\\
	& \ket{\psi_1}_{BC} = _A\braket{1|\psi}_{ABC} = \ket{1} \otimes \ket{0} = \ket{\beta_1}_B\otimes\ket{\gamma_1}_C.
	\end{align*}
	Thus $\{\ket{0},\ket{1}\}$ satisfies the condition mentioned in the Proposition[\ref{wrth}], since both $\ket{\psi_0}_{BC}$ and $\ket{\psi_1}_{BC}$ are product states and hence equivalently have Schmidt number one by lemma[\ref{sh1iffpro}]. Also $_{BC}\braket{\psi_0|\psi_1}_{BC} = \braket{0|1}\braket{0|0} = 0$. Hence $\{\ket{\psi_0}_{BC},\ket{\psi_1}_{BC}\}$ is an orthogonal set. But note that $\ket{\gamma_0}=\ket{\gamma_1}=\ket{0}$. Thus,
	\begin{equation*}
	_{C}\braket{\gamma_0|\gamma_1}_C = \braket{0|0} = 1 \neq ||\gamma_0||^2 \delta_{01} = 0.
	\end{equation*}
	\end{cex}
	This proves that the trace equality is not necessarily true always.
\end{description}
\subsection{Errors in the theorem statement of~\cite{1}}
The claim made in the Proposition[\ref{wrth}] is not necessarily true always. This can be shown by the counter-example given below,

\begin{cex}
Let $\mathbb{H}_A = \mathbb{H}_B = \mathbb{H}_C = \mathbb{H}$ where $\mathbb{H}$ is a two dimensional Hilbert space. Let
\begin{equation*}
\ket{\psi}_{ABC} = \ket{0} \otimes \ket{0} \otimes \ket{0} + \ket{1} \otimes \ket{1} \otimes \ket{0},
\end{equation*}
where $\{\ket{0},\ket{1}\}$ is an orthonormal basis for $\mathbb{H}$.\par
Now let us observe that $dim(\mathbb{H}_A)=min\{dim(\mathbb{H}_A),dim(\mathbb{H}_B),dim(\mathbb{H}_C)\}=2$ and $\exists$ orthonormal basis $\{\ket{0},\ket{1}\}\subseteq \mathbb{H}_A$ such that,
\begin{align*}
_{H_A}\braket{0|\psi}_{ABC}&=\ket{0}\otimes\ket{0},\\
_{H_A}\braket{1|\psi}_{ABC}&=\ket{1}\otimes\ket{0},
\end{align*}
i.e., the sufficient condition for the existence of Schmidt decomposition which was stated in Proposition[\ref{wrth}], is satisfied. Hence by Proposition[\ref{wrth}], $\ket{\psi}_{ABC}$ should have Schmidt decomposition.
\par
Now let us observe that $\ket{\psi}_{ABC}$ is not completely separable (Def[\ref{css}]). This is because when we observe $\ket{\psi}_{ABC}$ as a state of a bipartite Hilbert space $\mathbb{H}_A\otimes \mathbb{H}_D$, where $\mathbb{H}_D = \mathbb{H}_B\otimes \mathbb{H}_C$, the representation
\begin{equation*}
\ket{\psi}_{ABC} = \ket{0}_{\mathbb{H}_A}\otimes(\ket{0}\ket{0})_{\mathbb{H}_D} + \ket{1}_{\mathbb{H}_A}\otimes(\ket{1}\ket{0})_{\mathbb{H}_D}
\end{equation*}
is actually Schmidt decomposition of $\ket{\psi}_{ABC}$ in the Hilbert space $\mathbb{H}_A\otimes \mathbb{H}_D$; since $\ket{0}\ket{0}$ and $\ket{1}\ket{0}$ are orthonormal states in $\mathbb{H}_D$. Since its Schmidt decomposition has more than one nonzero terms, Schmidt number (Def[\ref{sn}]) of $\ket{\psi}_{ABC}$ is greater than 1. This implies $\ket{\psi}_{ABC}$ is not separable as a state of $\mathbb{H}_A\otimes \mathbb{H}_D$ by lemma[\ref{sh1iffpro}]. But then $\ket{\psi}_{ABC}$ is not completely separable when viewed as a state of $\mathbb{H}_A\otimes \mathbb{H}_B\otimes \mathbb{H}_C$. But we also see that $\ket{\psi}_{ABC}$ is partially separable (Def[\ref{pss}]) since it can be written as,
\begin{equation*}
\ket{\psi}_{ABC} = \big{(} \ket{0}\ket{0} + \ket{1}\ket{1} \big{)}\otimes \ket{0}.
\end{equation*}
Thus $\ket{\psi}_{ABC}$ is partially separable but not completely separable. By corollary[\ref{psc}], $\ket{\psi}_{ABC}$ does not have a Schmidt decomposition though Proposition[\ref{wrth}] claims that it does. Thus Proposition[\ref{wrth}] is disproved.
\end{cex}

\section{Correction of the previous result}
In this and the following section, we would like to provide the proof for a modified version of the previous Proposition[\ref{wrth}] which works equally for any $n$-partite Hilbert space. We will also show that the conditions mentioned are actually necessary and sufficient for a pure state to have Schmidt decomposition. But first let us clarify what we mean by Schmidt decomposition of a pure state in an n-partite Hilbert space.

\begin{definition}\label{shdb}\textbf{Schmidt decomposition for a bipartite Hilbert space}\\
	Any pure state $\ket{x}\in A \otimes B$, where $A$ and $B$ are finite dimensional Hilbert spaces, can be written as
	\begin{equation*}
	\ket{x} = \sum_{i=1}^{n} \lambda_{i}\ket{i_A}\ket{i_B},
	\end{equation*}
	where $n=min\{dim(A),dim(B)\}$, $\{i_A\}_{i=1}^n \subseteq A$ and $\{i_B\}_{i=1}^n \subseteq B$ are orthonormal sets. Also given $\ket{x}$, the set $\{\lambda_i \in [0,\infty);i=1,\dots,n\}$ as a multi-set is unique. This decomposition of a pure state is known as the \textbf{Schmidt decomposition}. 
\end{definition}

\begin{definition}\label{sn}\textbf{Schmidt number}\\
	Let $A,B$ be finite dimensional Hilbert spaces. The \textbf{Schmidt number} of a pure state $\ket{x} \in A\otimes B$ is the number of nonzero terms in its Schmidt decomposition.
\end{definition}

\begin{definition}\label{shdn}\textbf{Schmidt decomposition for an n-partite Hilbert space}\\
	A pure state $\ket{x} \in \otimes_{j=1}^{n}A_j$, where $A_j$ is a finite dimensional Hilbert space for each $j$ is said to have a Schmidt decomposition if  $\exists$ orthonormal sets $\{\ket{u_i^{A_j}}\}_{i=1}^m \subseteq A_j$ for each $j$ $(m=min\{dim(A_j)|j=1,\dots,n\})$ such that $\ket{x}$ can be written as
	\begin{equation*}
	\ket{x} = \sum_{i=1}^m \lambda_i \otimes_{j=1}^n \ket{u_i^{A_j}},
	\end{equation*}
	where $\lambda_i\in \mathbb{C}\quad \forall i$.
\end{definition}

Let $\ket{x} \in \otimes_{j=1}^{n}A_j$ where $A_j$ is a finite dimensional Hilbert space with dimension $N_{A_j}$ for each $j$.
\begin{lemma}\label{csl}
	If $\ket{x}$ is completely separable (Def[\ref{css}]), then $\ket{x}$ has a Schmidt decomposition.
\end{lemma}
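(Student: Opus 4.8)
The plan is to build the orthonormal sets required by Def[\ref{shdn}] directly out of the factors of $\ket{x}$. Since $\ket{x}$ is completely separable (Def[\ref{css}]), write $\ket{x}=\otimes_{j=1}^{n}\ket{\psi^{A_j}}$ with $\ket{\psi^{A_j}}\in A_j$, and let $m=\min\{\dim(A_j)\mid j=1,\dots,n\}$. First I would dispose of the degenerate case: if some factor $\ket{\psi^{A_j}}$ is the zero vector then $\ket{x}=0$, and we may take any orthonormal set $\{\ket{u_i^{A_j}}\}_{i=1}^{m}\subseteq A_j$ for each $j$ (such a set exists because $m\le\dim(A_j)$) together with $\lambda_i=0$ for all $i$; then $\sum_{i=1}^{m}\lambda_i\otimes_{j=1}^{n}\ket{u_i^{A_j}}=0=\ket{x}$, so the Schmidt decomposition holds trivially.

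In the main case every factor is nonzero. Set $\ket{u_1^{A_j}}=\ket{\psi^{A_j}}/\lVert\psi^{A_j}\rVert$, a unit vector in $A_j$. Since $\{\ket{u_1^{A_j}}\}$ is an orthonormal set of size $1\le m\le\dim(A_j)$, Gram--Schmidt lets us complete it to an orthonormal set $\{\ket{u_i^{A_j}}\}_{i=1}^{m}\subseteq A_j$, for each $j$. Now put $\lambda_1=\prod_{j=1}^{n}\lVert\psi^{A_j}\rVert$ and $\lambda_i=0$ for $2\le i\le m$. Then
\begin{equation*}
\sum_{i=1}^{m}\lambda_i\otimes_{j=1}^{n}\ket{u_i^{A_j}}=\lambda_1\otimes_{j=1}^{n}\ket{u_1^{A_j}}=\Bigl(\prod_{j=1}^{n}\lVert\psi^{A_j}\rVert\Bigr)\otimes_{j=1}^{n}\frac{\ket{\psi^{A_j}}}{\lVert\psi^{A_j}\rVert}=\otimes_{j=1}^{n}\ket{\psi^{A_j}}=\ket{x},
\end{equation*}
which is precisely a Schmidt decomposition of $\ket{x}$ in the sense of Def[\ref{shdn}].

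I do not expect any real obstacle here: the content of the lemma is simply that a product vector is already ``its own'' Schmidt decomposition with a single nonzero coefficient, padded with zero coefficients up to the required length $m$. The only two points needing a line of care are the zero-factor case handled above and the remark that the completion to exactly $m$ (rather than $\dim(A_j)$) orthonormal vectors is always possible because $m\le\dim(A_j)$ for every $j$.
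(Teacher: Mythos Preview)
Your proof is correct and follows essentially the same route as the paper's: normalize each factor, extend to an orthonormal set via Gram--Schmidt, and write $\ket{x}$ with a single nonzero coefficient $\lambda_1$. The only cosmetic differences are that you explicitly treat the zero-factor case (the paper tacitly assumes $\ket{x}\neq 0$) and that you extend to exactly $m$ vectors rather than to a full basis of each $A_j$, which is all Def[\ref{shdn}] requires anyway.
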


\begin{proof}
	If $\ket{x}$ is completely separable then it can be written as,
	\begin{align*}
	&\ket{x}=\otimes_{j=1}^n \ket{\psi^{A_j}}, \text{ where } \ket{\psi^{A_j}} \in A_j \quad \forall j,\\
	\text{or, } &\ket{x}= \lambda\otimes_{j=1}^n \ket{\tilde{\psi}^{A_j}}, 
	\end{align*}
	where $\lambda = ||x||, \quad \ket{\tilde{\psi}^{A_j}} = \frac{\ket{\psi^{A_j}}}{||\psi^{A_j}||} \quad \forall j$.
	
	We can extend $\ket{\tilde{\psi}^{A_j}}$ into an orthonormal basis set $\{\ket{u_i^{A_j}}\}_{i=1}^{N_{A_j}} \subseteq A_j$ $(\ket{\tilde{\psi}^{A_j}} = \ket{u_1^{A_j}})$ by Gram-Schmidt orthonormalization process, for each $j$. Thus for each $j$, $A_j$ has an orthonormal basis set $\{\ket{u_i^{A_j}}\}_{i=1}^{N_{A_j}}$ such that $\ket{x}$ can be written as
	\begin{equation*}
	\ket{x}=\sum_{i=1}^m \lambda_i \otimes_{j=1}^n \ket{u_i^{A_j}},
	\end{equation*}
	where $m=min\{dim(A_j)|j=1,\dots,n\}$ and
	$$
	\lambda_i=
	\begin{cases}
	\lambda; \text{ if }i=1,\\
	0; \text{ otherwise.}
	\end{cases}
	$$
	Hence by Definition [\ref{shdn}] $\ket{x}$ has a Schmidt decomposition.
	
\end{proof}

\begin{lemma}\label{sh1iffpro}
	Let $A,B$ be finite dimensional Hilbert spaces. A pure state $0\neq\ket{x}\in A\otimes B$ has Schmidt number 1 (Def[\ref{sn}]), if and only if it is completely separable (Def[\ref{pss}]).
\end{lemma}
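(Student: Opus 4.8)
The plan is to prove both implications directly from the definition of Schmidt decomposition for a bipartite space (Def[\ref{shdb}]) and the uniqueness of the multiset of Schmidt coefficients asserted there; no density matrices or rank arguments are needed, and indeed none are set up in the excerpt.

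For the forward direction, suppose $\ket{x}$ has Schmidt number $1$. Writing its Schmidt decomposition $\ket{x}=\sum_{i=1}^{n}\lambda_i\ket{i_A}\ket{i_B}$ (Def[\ref{shdb}]) and discarding the zero coefficients, exactly one term survives by Def[\ref{sn}], say $\ket{x}=\lambda_1\ket{1_A}\ket{1_B}$; moreover $\lambda_1\neq 0$ since $\ket{x}\neq 0$. Then $\ket{x}=(\lambda_1\ket{1_A})\otimes\ket{1_B}$ is manifestly of product form, so $\ket{x}$ is completely separable by Def[\ref{css}].

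For the converse, suppose $\ket{x}=\ket{\psi^A}\otimes\ket{\psi^B}$ is completely separable. Since $\ket{x}\neq 0$, both $\ket{\psi^A}$ and $\ket{\psi^B}$ are nonzero; set $\lambda:=\|x\|=\|\psi^A\|\,\|\psi^B\|>0$, $\ket{\tilde\psi^A}:=\ket{\psi^A}/\|\psi^A\|$ and $\ket{\tilde\psi^B}:=\ket{\psi^B}/\|\psi^B\|$, so that $\ket{x}=\lambda\,\ket{\tilde\psi^A}\otimes\ket{\tilde\psi^B}$ with $\ket{\tilde\psi^A},\ket{\tilde\psi^B}$ unit vectors. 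By Gram--Schmidt, extend $\ket{\tilde\psi^A}$ to an orthonormal basis $\{\ket{i_A}\}_{i=1}^{\dim A}$ of $A$ with $\ket{1_A}=\ket{\tilde\psi^A}$, and similarly $\{\ket{i_B}\}_{i=1}^{\dim B}$ of $B$ with $\ket{1_B}=\ket{\tilde\psi^B}$. Putting $\lambda_1=\lambda$ and $\lambda_i=0$ for $i\geq 2$ exhibits $\ket{x}=\sum_{i=1}^{n}\lambda_i\ket{i_A}\ket{i_B}$ in precisely the form of Def[\ref{shdb}] with each $\lambda_i\in[0,\infty)$. By the uniqueness of the multiset $\{\lambda_i\}$ in such a representation (Def[\ref{shdb}]), this multiset is the set of Schmidt coefficients of $\ket{x}$, and it has exactly one nonzero element; hence the Schmidt number of $\ket{x}$ is $1$ by Def[\ref{sn}].

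I expect no substantive obstacle; the only points requiring care are (i) invoking $\ket{x}\neq 0$ to ensure the surviving coefficient, respectively both tensor factors, are nonzero, and (ii) basing the counting argument in the converse on the uniqueness clause of Def[\ref{shdb}] so that the constructed decomposition genuinely computes the Schmidt number, rather than merely being \emph{a} decomposition with one nonzero term.
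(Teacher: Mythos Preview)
Your proof is correct and follows essentially the same approach as the paper: the forward direction is identical, and for the converse both you and the paper normalize the two factors and extend via Gram--Schmidt to exhibit a Schmidt-form decomposition with a single nonzero coefficient (the paper does this by citing the proof of Lemma~\ref{csl}, whereas you spell it out directly). Your explicit appeal to the uniqueness clause in Def[\ref{shdb}] to certify that the constructed decomposition really computes the Schmidt number is a point the paper leaves implicit, so your version is, if anything, slightly more careful.
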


\begin{proof}
	$(\Rightarrow)$\\
	Let $\ket{x}$ have Schmidt number 1. This implies Schmidt decomposition of $\ket{x}$ has exactly 1 nonzero term. Thus $\ket{x}$ can be written as,
	\begin{equation*}
	\ket{x} = \lambda\ket{u^A}\ket{u^B},
	\end{equation*}
	where $\lambda=||x||$ and $\ket{u^A}\in A$, $\ket{u^B}\in B$ are normalized. Clearly $\ket{x}$ is a completely separable state by Definition[\ref{css}] since it can be written as,
	\begin{equation*}
	\ket{x} = \ket{{u^\prime}^A}\otimes\ket{u^B},
	\end{equation*}
	where $\ket{{u^\prime}^A}= \lambda\ket{u^A}\in A$ and $\ket{u^B}\in B$.\par
	$(\Leftarrow)$ Conversely, Let $\ket{x}$ be completely separable. From proof of Lemma[\ref{csl}] we observe that $\ket{x}$ can be written as,
	\begin{equation*}
	\ket{x} = \sum_{i=1}^{m}\lambda_i\ket{u_i^{A}}\ket{u_i^B},
	\end{equation*}
	where $\{u_i^A\}_{i=1}^m\subseteq A$ and $\{u_i^B\}_{i=1}^m\subseteq B$ are orthonormal sets, $m=min\{dim(A),dim(B)\}$ and
	$$
	\lambda_i
	\begin{cases}
	\neq 0; \text{ if }i=1,\\
	=0; \text{ otherwise.}
	\end{cases}
	$$
	Thus $\ket{x}$ has only 1 nonzero term in its Schmidt decomposition. Hence by Definition[\ref{sn}], $\ket{x}$ has Schmidt number 1.
	
\end{proof}

Let $A_1,A_2,A_3$ be finite dimensional Hilbert spaces with dimensions $N_{A_1},N_{A_2},N_{A_3}$ respectively. Let $\ket{x} \in \otimes_{j=1}^3 A_j$ be a pure state.

\begin{lemma}\label{cs2}
	Let us assume for each $j\in\{1,2,3\}$ $\exists$ orthonormal basis $\{\ket{u_i^{A_j}}\}_{i=1}^{N_{A_j}} \subseteq A_j$ such that
	$$
	_{A_j}\braket{u_i^{A_j}|x}=
	\begin{cases}
	\lambda_{ij} \otimes_{k\neq j}\ket{v_{ij}^{A_k}}; \text{ if } _{A_j}\braket{u_i^{A_j}|x}\neq0,\\
	0; \text{ otherwise,}
	\end{cases}
	$$
	where $\lambda_{ij} = ||_{A_j}\braket{u_i^{A_j}|x}||$ and $0 \neq \ket{v_{ij}^{A_k}} \in A_k$ is normalized for each $k \neq j$.
	\par 
	If $\exists$ some $s\in \{1,2,3\}$ such that for some $t \in \{i;\text{ }_{A_s}\braket{u_i^{A_s}|x}\neq0\}$ and $j\neq s$, we have $\braket{u_t^{A_s}|v_{ij}^{A_s}}\neq 0$ for at least 2 distinct values of $i$, then we can find an orthonormal basis $\{\ket{\tilde{u}_i^{A_j}}\}_{i=1}^{N_{A_j}} \subseteq A_j$ for which
	$$
	_{A_j}\braket{\tilde{u}_i^{A_j}|x}=
	\begin{cases}
	\tilde{\lambda}_{ij} \otimes_{k\neq j}\ket{\tilde{v}_{ij}^{A_k}}; \text{ if } _{A_j}\braket{\tilde{u}_i^{A_j}|x}\neq 0,\\
	0; \text{ otherwise,}
	\end{cases}
	$$
	and
	$$
	|\{i;\text{ }_{A_j}\braket{\tilde{u}_i^{A_j}|x}\neq 0\}| = |\{i;\text{ }_{A_j}\braket{u_i^{A_j}|x}\neq 0\}| - 1,
	$$
	where $\tilde{\lambda}_{ij} = ||_{A_j}\braket{\tilde{u}_i^{A_j}|x}||$ and $\ket{\tilde{v}_{ij}^{A_k}} \in A_k$ is normalized for each $k\neq j$. In other words, the new orthonormal basis has one less component with nonzero partial inner product with $\ket{x}$.
\end{lemma}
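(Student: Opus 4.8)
The plan is to push on the full tripartite hypothesis — product structure of the partial inner products along \emph{all three} bases, not merely along $A_j$ — to show that the ``offending'' slices all point along one common completely separable direction, and then to annihilate exactly one of them by a $2\times2$ unitary rotation on $A_j$. Fix the data supplied by the hypothesis: the index $s$, an index $j\neq s$, and $t$ with $_{A_s}\braket{u_t^{A_s}|x}\neq0$; let $r$ be the third index, so $\{j,s,r\}=\{1,2,3\}$, and for each $i$ write the nonzero $A_j$-slices as $_{A_j}\braket{u_i^{A_j}|x}=\lambda_{ij}\,\ket{v_{ij}^{A_s}}\otimes\ket{v_{ij}^{A_r}}$. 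Set $S=\{\,i:\ _{A_j}\braket{u_i^{A_j}|x}\neq0\ \text{and}\ \braket{u_t^{A_s}|v_{ij}^{A_s}}\neq0\,\}$, so that $|S|\geq2$ by hypothesis, and let $V_S=\mathrm{span}\{\ket{u_i^{A_j}}:i\in S\}\subseteq A_j$, with orthogonal complement $V_S^\perp$.

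First I would expand $\ket x=\sum_i\ket{u_i^{A_j}}\otimes{}_{A_j}\braket{u_i^{A_j}|x}$ and take the partial inner product with $\ket{u_t^{A_s}}$: exactly the indices in $S$ survive, so $_{A_s}\braket{u_t^{A_s}|x}=\sum_{i\in S}\lambda_{ij}\braket{u_t^{A_s}|v_{ij}^{A_s}}\,\ket{u_i^{A_j}}\otimes\ket{v_{ij}^{A_r}}$ with every coefficient nonzero. Since the left side is a nonzero state of Schmidt number $1$ (Lemma[\ref{sh1iffpro}]) and $\{\ket{u_i^{A_j}}:i\in S\}$ is orthonormal, writing that product as $\ket a\otimes\ket b$ and pairing with each $\ket{u_i^{A_j}}$ forces all the $\ket{v_{ij}^{A_r}}$, $i\in S$, to be scalar multiples of one unit vector $\ket{w^{A_r}}\in A_r$. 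Hence $\ket x=\ket\xi\otimes\ket{w^{A_r}}+\ket\chi$, where $\ket\xi=\sum_{i\in S}\lambda_{ij}e^{i\theta_i}\ket{u_i^{A_j}}\otimes\ket{v_{ij}^{A_s}}\in A_j\otimes A_s$ (the $\theta_i$ absorbing the phases just found) has $A_j$-support inside $V_S$, while $\ket\chi$ (the terms with $i\notin S$) has $A_j$-support inside $V_S^\perp$, and $\ket\xi\neq0$.

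The crux is to improve this to: $\ket\xi$ is a product state. This is where the $A_r$-slice hypothesis enters. For each $k$, apply $_{A_r}\braket{u_k^{A_r}|\cdot}$ to $\ket x=\ket\xi\otimes\ket{w^{A_r}}+\ket\chi$ and then project the $A_j$-factor onto $V_S$; the contribution of $\ket\chi$ vanishes, leaving $\braket{u_k^{A_r}|w^{A_r}}\,\ket\xi=(P_{V_S}\otimes I)\,{}_{A_r}\braket{u_k^{A_r}|x}$. The right side is the $V_S$-restriction of a product state, hence has Schmidt number $\le1$ (Def[\ref{sn}]); picking $k$ with $\braket{u_k^{A_r}|w^{A_r}}\neq0$ (such $k$ exists since $\ket{w^{A_r}}\neq0$) and recalling $\ket\xi\neq0$ forces $\ket\xi=\mu\,\ket{p^{A_j}}\otimes\ket{q^{A_s}}$ with $\mu>0$. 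Pairing the identity $\sum_{i\in S}\lambda_{ij}e^{i\theta_i}\ket{u_i^{A_j}}\otimes\ket{v_{ij}^{A_s}}=\mu\,\ket{p^{A_j}}\otimes\ket{q^{A_s}}$ with each $\ket{u_i^{A_j}}$, $i\in S$, gives $\ket{v_{ij}^{A_s}}\parallel\ket{q^{A_s}}$; together with the parallelism of the $\ket{v_{ij}^{A_r}}$ this means every slice $_{A_j}\braket{u_i^{A_j}|x}$ with $i\in S$ equals $c_i\,\ket{q^{A_s}}\otimes\ket{w^{A_r}}$ for some scalar $c_i\neq0$.

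Finally, choose any $a,b\in S$, put $n=\sqrt{|c_a|^2+|c_b|^2}$, and replace $\ket{u_a^{A_j}},\ket{u_b^{A_j}}$ by the orthonormal pair $\ket{\tilde u_a^{A_j}}=n^{-1}(c_a\ket{u_a^{A_j}}+c_b\ket{u_b^{A_j}})$ and $\ket{\tilde u_b^{A_j}}=n^{-1}(\bar c_b\ket{u_a^{A_j}}-\bar c_a\ket{u_b^{A_j}})$, leaving all the other basis vectors unchanged. Because these two vectors span the same plane as $\ket{u_a^{A_j}},\ket{u_b^{A_j}}$ (orthogonal to the rest of the basis), a one-line computation gives $_{A_j}\braket{\tilde u_a^{A_j}|x}=n\,\ket{q^{A_s}}\otimes\ket{w^{A_r}}\neq0$, still a product, and $_{A_j}\braket{\tilde u_b^{A_j}|x}=0$, while the slices for $i\neq a,b$ are unchanged and hence still products or zero. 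Thus the new orthonormal basis has product-or-zero slices and exactly one nonzero slice fewer than $\{\ket{u_i^{A_j}}\}$, which is the claim. I expect the third paragraph to be the real obstacle: it is the only place where ``tripartite'' is genuinely used, and it is what upgrades the weak conclusion obtained earlier (parallel $A_r$-factors only) into the strong statement that all the $S$-slices share a single completely separable direction — without it the final rotation would not collapse two slices into one. The parallelism step is a warm-up instance of the same projection-plus-Schmidt-number argument, and the concluding rotation is routine.
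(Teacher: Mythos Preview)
Your argument is correct and follows essentially the same route as the paper: use the Schmidt-number-$1$ condition on the $A_s$-slice at $t$ to force the $A_r$-components $\ket{v_{ij}^{A_r}}$ (for the overlapping indices) to be parallel, then use the Schmidt-number-$1$ condition on an $A_r$-slice to force the $A_s$-components to be parallel, and finish with a $2\times2$ rotation in $A_j$ that kills one slice. The only cosmetic difference is that the paper works with just two indices $\tilde{i},\tilde{\tilde{i}}$ and reads off the second parallelism directly from the rank-$1$ coefficient matrix, whereas you carry the full set $S$ and isolate the relevant piece via the projection $P_{V_S}\otimes I$; both are the same mechanism.
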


\begin{proof}
	Without loss of generality we can assume $j=1$ and $s=2$ and proceed with the proof. Clearly then we can write,
	\begin{equation*}
	\ket{x} = \sum_{i\in {\{a;_{A_1}\braket{u_a^{A_1}|x}\neq 0\}}} \lambda_{i1} \ket{u_i^{A_1}}\ket{v_{i1}^{A_2}}\ket{v_{i1}^{A_3}}.
	\end{equation*}
	Now let $\braket{u_t^{A_2}|v_{i1}^{A_2}} \neq 0$ for at least two distinct $i$, let $\tilde{i},\tilde{\tilde{i}}$, where $\tilde{i} \neq \tilde{\tilde{i}}$. Let us observe that,
	\begin{equation}\label{23}
	_{A_2}\braket{u_t^{A_2}|x} = \sum_{i\in{\{a;_{A_1}\braket{u_a^{A_1}|x}\neq 0\}}} \lambda_{i1} \braket{u_t^{A_2}|v_{i1}^{A_2}}\ket{u_i^{A_1}}\ket{v_{i1}^{A_3}}.
	\end{equation}
	We can always write $\ket{v_{i1}^{A_3}}$ in terms of the orthonormal basis $\{\ket{u_l^{A_3}}\}_{l=1}^{N_{A_3}}$,
	\begin{equation}\label{24}
	\ket{v_{i1}^{A_3}} = \sum_{l=1}^{N_{A_3}}\braket{u_l^{A_3}|v_{i1}^{A_3}}\ket{u_l^{A_3}}.
	\end{equation}
	Replacing $\ket{v_{i1}^{A_3}}$ in equation(\ref{23}) with the above representation(\ref{24}) we get,
	\begin{align}\label{25}
	&_{A_2}\braket{u_t^{A_2}|x} = \sum_{i} \lambda_{i1} \braket{u_t^{A_2}|v_{i1}^{A_2}}\ket{u_i^{A_1}}(\sum_{l=1}^{N_{A_3}}\braket{u_l^{A_3}|v_{i1}^{A_3}}\ket{u_l^{A_3}})\\
	&\text{ where } i\in{\{a;\text{ }_{A_1}\braket{u_a^{A_1}|x}\neq 0\}}.\nonumber
	\end{align}
	From the above equation(\ref{25}) we get,
	\begin{equation*}
	_{A_2}\braket{u_t^{A_2}|x} = \sum_{i}\ket{u_i^{A_1}}(\sum_{l=1}^{N_{A_3}} f_{il}\ket{u_l^{A_3}}),
	\end{equation*}
	where $f_{il} = \lambda_{i1} \braket{u_t^{A_2}|v_{i1}^{A_2}}\braket{u_l^{A_3}|v_{i1}^{A_3}}$.
	But $_{A_2}\braket{u_t^{A_2}|x}$ has Schmidt number 1 by lemma[\ref{sh1iffpro}], since $_{A_2}\braket{u_t^{A_2}|x}$ is completely separable in $A_1\otimes A_3$. Hence the matrix $[f_{il}]$ must have linearly dependent rows. This implies that the nonzero vectors $\sum_{l=1}^{N_{A_3}}f_{\tilde{i}l}\ket{u_l^{A_3}}$ and $\sum_{l=1}^{N_{A_3}}f_{\tilde{\tilde{i}}l}\ket{u_l^{A_3}}$ are linearly dependent. This further implies that the vectors $\ket{v_{\tilde{i}1}^{A_3}}$ and $\ket{v_{\tilde{\tilde{i}}1}^{A_3}}$ are actually linearly dependent, i.e.,
	\begin{equation}\label{ld}
	\ket{v_{\tilde{i}1}^{A_3}} = c_1\ket{v_{\tilde{\tilde{i}}1}^{A_3}},
	\end{equation}
	where $c_1 \neq 0 \text{ is constant}$.
	Now let us observe that $\exists$ at least one $t^\prime \in \{a;\text{ }_{A_3}\braket{u_a^{A_3}|x}\neq 0\}$ such that $\braket{u_{t^\prime}^{A_3}|{v_{\tilde{i}1}^{A_3}}} \neq 0$. Then by equation(\ref*{ld}) we get, $\braket{u_{t^\prime}^{A_3}|{v_{\tilde{\tilde{i}}1}^{A_3}}} \neq 0$ as well. In other words, $\braket{u_{t^\prime}^{A_3}|{v_{i1}^{A_3}}} \neq 0$ for at least 2 distinct values of $i=\tilde{i},\tilde{\tilde{i}}$, where $\tilde{i}\neq\tilde{\tilde{i}}$, i.e., the choice $s=2$ is not necessary for the proof; any arbitrary $s \neq 1$ could have been chosen. Then following the exactly same logic as before we can say that the vectors $\ket{v_{\tilde{i}1}^{A_2}}$ and $\ket{v_{\tilde{\tilde{i}}1}^{A_2}}$ are linearly dependent, i.e.,
	\begin{equation*}
	\ket{v_{\tilde{i}1}^{A_2}} = c_2\ket{v_{\tilde{\tilde{i}}1}^{A_2}},
	\end{equation*}
	where $c_2 \neq 0 \text{ is constant}$.
	Now we can observe that $\ket{x}$ may be written as,
	\begin{align*}
	\ket{x} = &\sum_{i\neq \tilde{i},\tilde{\tilde{i}}}\lambda_{i1} \ket{u_i^{A_1}}\ket{v_{i1}^{A_2}}\ket{v_{i1}^{A_3}} +\\ &(\lambda_{\tilde{i}1}c_1c_2\ket{u_{\tilde{i}}^{A_1}} + \lambda_{\tilde{\tilde{i}}1}\ket{u_{\tilde{\tilde{i}}}^{A_1}})\ket{v_{\tilde{\tilde{i}}1}^{A_2}}\ket{v_{\tilde{\tilde{i}}1}^{A_3}},
	\end{align*}
	where $i\in{\{a;_{A_1}\braket{u_a^{A_1}|x}\neq 0\}}$, i.e.,
	\begin{align*}
	\ket{x}= \sum_{i\neq \tilde{i},\tilde{\tilde{i}}}\lambda_{i1} \ket{u_i^{A_1}}\ket{v_{i1}^{A_2}}\ket{v_{i1}^{A_3}} + (a\ket{u_{\tilde{i}}^{A_1}} + b\ket{u_{\tilde{\tilde{i}}}^{A_1}})\ket{v_{\tilde{\tilde{i}}1}^{A_2}}\ket{v_{\tilde{\tilde{i}}1}^{A_3}},
	\end{align*}
	where $a=\lambda_{\tilde{i}1}c_1c_2$ and $b=\lambda_{\tilde{\tilde{i}}1}$.
	Now we construct an orthonormal basis $\{\ket{\tilde{u}_{i}^{A_1}}\}_{i=1}^{N_{A_1}}$ for $A_1$,
	$$
	\ket{\tilde{u}_{i}^{A_1}} = 
	\begin{cases}
	\ket{u_i^{A_1}}; \text{ if }i\neq \tilde{i},\tilde{\tilde{i}},\\
	\frac{a\ket{u_{\tilde{i}}^{A_1}} + b\ket{u_{\tilde{\tilde{i}}}^{A_1}}}{||a\ket{u_{\tilde{i}}^{A_1}} + b\ket{u_{\tilde{\tilde{i}}}^{A_1}}||};\text{ if } i=\tilde{i},\\
	\frac{b\ket{u_{\tilde{i}}^{A_1}} - a\ket{u_{\tilde{\tilde{i}}}^{A_1}}}{||b\ket{u_{\tilde{i}}^{A_1}} - a\ket{u_{\tilde{\tilde{i}}}^{A_1}}||};\text{ if } i=\tilde{\tilde{i}}.
	\end{cases}
	$$
	It is easy to observe that
	$$
	_{A_1}\braket{\tilde{u}_{i}^{A_1}|x} = 
	\begin{cases}
	_{A_1}\braket{u_{i}^{A_1}|x}; \text{ if }i\neq \tilde{i},\tilde{\tilde{i}},\\
	||a\ket{u_{\tilde{i}}^{A_1}} + b\ket{u_{\tilde{\tilde{i}}}^{A_1}}||\ket{v_{\tilde{\tilde{i}}1}^{A_2}}\ket{v_{\tilde{\tilde{i}}1}^{A_3}};\text{ if } i=\tilde{i},\\
	0; \text{ if }i=\tilde{\tilde{i}}.
	\end{cases}
	$$
	Thus $\{\ket{\tilde{u}_{i}^{A_1}}\}_{i=1}^{N_{A_1}}$ is an orthonormal basis for $A_1$ such that,
	$$
	_{A_1}\braket{\tilde{u}_i^{A_1}|x}=
	\begin{cases}
	\tilde{\lambda}_{i1} \otimes_{k\neq 1}\ket{\tilde{v}_{i1}^{A_k}}; \text{ if } _{A_1}\braket{\tilde{u}_i^{A_1}|x}\neq 0,\\
	0; \text{ otherwise,}
	\end{cases}
	$$
	and
	$$
	|\{a;\text{ }_{A_1}\braket{\tilde{u}_a^{A_1}|x}\neq 0\}|=|\{a;\text{ }_{A_1}\braket{u_a^{A_1}|x}\neq 0\}|-1,
	$$
	where $\tilde{\lambda}_{i1} = ||_{A_1}\braket{\tilde{u}_i^{A_1}|x}||$ and $\ket{\tilde{v}_{i1}^{A_k}} \in A_k$ is normalized for each $k\neq 1$. In other words, the new orthonormal basis $\{\ket{\tilde{u}_{i}^{A_1}}\}_{i=1}^{N_{A_1}}$ is the same as the old orthonormal basis $\{\ket{u_{i}^{A_1}}\}_{i=1}^{N_{A_1}}$, except for two terms; for indices $\tilde{i}$ and $\tilde{\tilde{i}}$. These two terms of the new orthonormal basis are actually suitable linear combinations of $\ket{u_{\tilde{i}}^{A_1}}$ and $\ket{u_{\tilde{\tilde{i}}}^{A_1}}$.\par Now for any other value of $j\neq1$ the exact same calculation will result as nowhere we have applied any specific property of Hilbert space $A_1$ within the proof. Thus our claim is proved for any arbitrary choice of $j$ and $s$.
	
\end{proof}

\subsection{Corrected version of the sufficient condition}
Now let us state a sufficient condition for the existence of Schmidt decomposition of a pure state in a tripartite Hilbert space. The following theorem is actually the corrected version of Proposition[\ref{wrth}].

\begin{theorem}\label{3p}
	Let $A_1,A_2,A_3$ be finite dimensional Hilbert spaces with dimensions $N_{A_1},N_{A_2},N_{A_3}$ respectively. Let $0\neq \ket{x}\in\otimes_{j=1}^{3}A_j$ be a pure state. Then $\ket{x}$ has Schmidt decomposition if $\exists$ orthonormal basis $\{\ket{u_i^{A_j}}\}_{i=1}^{N_{A_j}}$ for each $j$ such that if $_{A_j}\braket{u_i^{A_j}|x}\neq 0$ then $_{A_j}\braket{u_i^{A_j}|x}$ is completely separable in $\otimes_{k\neq j}A_k$.
\end{theorem}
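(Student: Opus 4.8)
The plan is to iterate Lemma~\ref{cs2} until no further reduction is possible and then read a Schmidt decomposition straight off the terminal configuration. First I would note that the hypothesis of the theorem is exactly the standing assumption of Lemma~\ref{cs2}: for each $j$ we have an orthonormal basis $\{\ket{u_i^{A_j}}\}_{i=1}^{N_{A_j}}$ of $A_j$ such that every nonzero partial inner product $_{A_j}\braket{u_i^{A_j}|x}$ is completely separable in $\otimes_{k\neq j}A_k$, hence, being a nonzero completely separable vector, of the form $\lambda_{ij}\otimes_{k\neq j}\ket{v_{ij}^{A_k}}$ with $\lambda_{ij}=\|{}_{A_j}\braket{u_i^{A_j}|x}\|>0$ and each $\ket{v_{ij}^{A_k}}$ a unit vector. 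Write $S_j=\{i:\,_{A_j}\braket{u_i^{A_j}|x}\neq 0\}$; since $\ket{x}\neq 0$ and $\{\ket{u_i^{A_j}}\}_i$ is a basis, each $S_j\neq\emptyset$.

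Next I would set up a finite descent on the integer $n=|S_1|+|S_2|+|S_3|\geq 3$. The key bookkeeping observation is that the set $S_j$ and the vectors $_{A_j}\braket{u_i^{A_j}|x}$ depend only on the orthonormal basis chosen in $A_j$ and on $\ket{x}$; so replacing the basis of a single factor $A_{j_0}$ changes neither $S_k$ nor the complete separability of $_{A_k}\braket{u_i^{A_k}|x}$ for $k\neq j_0$. Hence, whenever the extra hypothesis of Lemma~\ref{cs2} holds (some $s$, some $t\in S_s$, some $j\neq s$ with $\braket{u_t^{A_s}|v_{ij}^{A_s}}\neq 0$ for at least two $i\in S_j$), invoking Lemma~\ref{cs2} yields a new orthonormal basis of $A_j$ that still has the separability property, for which $|S_j|$ drops by exactly one, and which leaves the other two $S$'s untouched; thus $n$ strictly decreases while the standing assumption of Lemma~\ref{cs2} is preserved. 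Since $n$ is a positive integer bounded below by $3$, after finitely many such steps we arrive at a configuration in which Lemma~\ref{cs2} is no longer applicable.

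In that terminal configuration I would exploit non-applicability to force orthonormality. Fix the factor $j=1$ and write $\ket{x}=\sum_{i\in S_1}\lambda_{i1}\ket{u_i^{A_1}}\ket{v_{i1}^{A_2}}\ket{v_{i1}^{A_3}}$. Taking the partial inner product with $\ket{u_t^{A_2}}$ gives ${}_{A_2}\braket{u_t^{A_2}|x}=\sum_{i\in S_1}\lambda_{i1}\braket{u_t^{A_2}|v_{i1}^{A_2}}\ket{u_i^{A_1}}\ket{v_{i1}^{A_3}}$, and since the vectors $\{\ket{u_i^{A_1}}\ket{v_{i1}^{A_3}}\}_{i\in S_1}$ are linearly independent (a partial inner product with $\ket{u_{i_0}^{A_1}}$ isolates the $i_0$ term), we get $t\in S_2$ iff $\braket{u_t^{A_2}|v_{i1}^{A_2}}\neq 0$ for some $i\in S_1$. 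Thus, in the basis $\{\ket{u_t^{A_2}}\}$, each $\ket{v_{i1}^{A_2}}$ is supported inside $S_2$ and the supports cover $S_2$; non-applicability of Lemma~\ref{cs2} for the choice $(s,j)=(2,1)$ says each $t\in S_2$ lies in at most one of these supports, so they partition $S_2$ and are pairwise disjoint, whence $\{\ket{v_{i1}^{A_2}}\}_{i\in S_1}$ is an orthonormal set. The identical argument with $(s,j)=(3,1)$ makes $\{\ket{v_{i1}^{A_3}}\}_{i\in S_1}$ orthonormal, and $\{\ket{u_i^{A_1}}\}_{i\in S_1}$ is orthonormal already. Then $\ket{x}=\sum_{i\in S_1}\lambda_{i1}\ket{u_i^{A_1}}\ket{v_{i1}^{A_2}}\ket{v_{i1}^{A_3}}$ displays three orthonormal families of common cardinality $|S_1|$, forcing $|S_1|\leq\min\{N_{A_1},N_{A_2},N_{A_3}\}=m$; extending each family to an orthonormal set of size $m$ and padding with zero coefficients produces the representation required by Definition~\ref{shdn}, so $\ket{x}$ has a Schmidt decomposition.

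The step I expect to be the main obstacle is this terminal analysis: one must recognise that the \emph{failure} of the reduction in Lemma~\ref{cs2} is not merely the absence of an obstruction but actually forces the disjoint-support combinatorics that upgrades the unit vectors $\ket{v_{i1}^{A_k}}$ to a genuine orthonormal set. A secondary point needing care is the legitimacy of the iteration — checking that each application of Lemma~\ref{cs2} preserves its own hypotheses on all three tensor factors and strictly lowers $n$ — so that the descent terminates and the argument is not circular.
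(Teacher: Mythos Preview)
Your proof is correct and follows essentially the same route as the paper: both iterate Lemma~\ref{cs2} until it can no longer be applied and then read off a Schmidt decomposition from the resulting configuration. The paper packages this as an induction on $m=\max_j m_j$ with a two-level case split (the $m_j$ all equal or not; a one-to-one correspondence $\ket{v_{ij}^{A_k}}=\ket{u_{f^k(i)}^{A_k}}$ present or not), whereas your descent on $\sum_j |S_j|$ together with the disjoint-support argument at the terminal step is a somewhat cleaner repackaging of the same idea.
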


\begin{proof}
	We provide only the major steps of the proof here. The detailed Proof can be found in Appendix~\ref{app1}.\\
	\textbf{Direction of proof:}\\
	We have for each $j\in \{1,2,3\}$,
	$$
	_{A_j}\braket{u_i^{A_j}|x}=
	\begin{cases}
	\lambda_{ij} \otimes_{k\neq j}\ket{v_{ij}^{A_k}}; \text{ if } _{A_j}\braket{u_i^{A_j}|x}\neq 0,\\
	0; \text{ otherwise,}
	\end{cases}
	$$
	where $\lambda_{ij} = ||_{A_j}\braket{u_i^{A_j}|x}||$, and $0 \neq \ket{v_{ij}^{A_k}} \in A_k$ is normalized for each $k \neq j$.\par
	Let us define for each $j\in\{1,2,3\}$,
	\begin{equation*}
	m_j := |\{i;\text{ }_{A_j}\bra{u_i^{A_j}}\ket{x}\neq 0\}|.
	\end{equation*}
	We also define,
	\begin{equation*}
	m := max\{m_1,m_2,m_3\}.
	\end{equation*}
	The result is proved by induction on $m$.
	\begin{enumerate}
		\item \textbf{Base case}: For $m=1$, $\ket{x}$ is completely separable. Hence proof is a direct consequence of Lemma[\ref{csl}].
		
		\item \textbf{Inductive hypotheses}: Theorem[\ref{3p}] assumed to be true for $m=\tilde{m}-1$.
		
		\item For $m=\tilde{m}$, two cases may arise.
		\begin{enumerate}
			\item \textbf{Case I:} $m_1=m_2=m_3=\tilde{m}$.\\
			Here two subcases may arise.
			\begin{enumerate}
				\item \textbf{Subcase I.A:} one-one correspondence between the sets $\{\ket{u_i^{A_k}};_{A_k}\braket{u_i^{A_k}|x}\neq 0\}$ and $\{\ket{v_{ij}^{A_k}}\}$ for some $j$ and for each $k\neq j$.
				\item \textbf{Subcase I.B:} For each $j$ $\exists s\neq j$ such that for some $t\in \{i;\text{ }_{A_s}\braket{u_i^{A_s}|x}\neq 0\}$, $\braket{u_t^{A_s}|v_{ij}^{A_s}}\neq 0$ for at least 2 distinct values of $i$. Then the use of Lemma[\ref{cs2}] suggests that the value of $m$ can be reduced by 1. Thus proof comes directly from inductive hypotheses.
			\end{enumerate}
			
			\item \textbf{Case II:} $m_1,m_2,m_3$ not all equal.\\
			In this case also the use of Lemma[\ref{cs2}] suggests that the value of $m$ can be reduced by 1. Thus proof comes directly from inductive hypotheses.
		\end{enumerate}
	\end{enumerate}
	Hence the proof is complete.
	
\end{proof}

\section{Extension of the sufficient condition to multipartite Hilbert spaces}
Now we have a sufficient condition for existence of Schmidt decomposition of a pure state in a tripartite Hilbert space. The next theorem extends this result for an $n$-partite Hilbert space (Def[\ref{nph}]).

\begin{theorem}\label{np}
	Let $A_1,A_2,\dots,A_n$ $(n\geq 3)$ be finite dimensional Hilbert spaces with dimensions $N_{A_1},N_{A_2},\dots,N_{A_n}$ respectively. Then a nonzero pure state $\ket{x} \in \otimes_{j=1}^n A_j$ has Schmidt decomposition if $\exists$ orthonormal basis $\{\ket{u_i^{A_j}}\}_{i=1}^{N_{A_j}} \subseteq A_j$ for each $j$, such that, if $_{A_j}\braket{u_i^{A_j}|x}\neq 0$ then $_{A_j}\braket{u_i^{A_j}|x}$ is completely separable in $\otimes_{k\neq j}A_k$.
\end{theorem}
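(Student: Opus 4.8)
The plan is to reduce the $n$-partite statement to the tripartite statement (Theorem~\ref{3p}) by an induction on $n$, grouping the last several factors into a single composite Hilbert space. First I would fix the orthonormal bases $\{\ket{u_i^{A_j}}\}_{i=1}^{N_{A_j}}$ for each $j$ guaranteed by the hypothesis, and write
\begin{equation*}
\ket{x} = \sum_{i \in S_1} \lambda_{i1}\,\ket{u_i^{A_1}}\otimes \ket{v_{i1}^{A_2\cdots A_n}},
\end{equation*}
where $S_1 = \{i : {}_{A_1}\!\braket{u_i^{A_1}|x}\neq 0\}$ and, by hypothesis, each $\ket{v_{i1}^{A_2\cdots A_n}} = \otimes_{k=2}^n \ket{v_{i1}^{A_k}}$ is completely separable in $\otimes_{k=2}^n A_k$ (so in particular $_{A_1}\!\braket{u_i^{A_1}|x}$ has ``Schmidt number one'' when $A_2\cdots A_n$ is viewed as one factor). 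Do the same splitting for $j=2$ and for $j=3,\dots,n$. The goal is to show the hypothesis of Theorem~\ref{3p} holds for $\ket{x}$ regarded as a state of the \emph{tripartite} space $A_1 \otimes A_2 \otimes (\otimes_{k=3}^n A_k)$, and then invoke Theorem~\ref{3p} together with the inductive hypothesis on the grouped factor.

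The key steps, in order, would be: (i) Set up the induction on $n$, base case $n=3$ being exactly Theorem~\ref{3p}. (ii) For the inductive step, regard $B := \otimes_{k=3}^n A_k$ as a single Hilbert space, so $\ket{x} \in A_1 \otimes A_2 \otimes B$. Check that for $j=1$ and $j=2$ the partial inner products ${}_{A_j}\!\braket{u_i^{A_j}|x}$ are completely separable in the remaining two tripartite factors — for $j=1$ this is immediate since complete separability in $\otimes_{k=2}^n A_k$ implies separability across the bipartition $A_2 \otimes B$; similarly for $j=2$. (iii) The remaining issue is producing an orthonormal basis of $B$ satisfying the tripartite hypothesis with respect to the bipartition $A_1 \otimes A_2$. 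Here I would take an orthonormal basis of $B$ of product form $\otimes_{k=3}^n \ket{u_i^{A_k}}$ (all tensor products of the given bases of $A_3,\dots,A_n$) and argue that for each such basis element $\ket{w}$, ${}_B\!\braket{w|x}$, if nonzero, is completely separable in $A_1 \otimes A_2$: indeed ${}_B\!\braket{w|x}$ equals a partial inner product obtained by successively pairing off $A_3,\dots,A_n$, and by hypothesis each such pairing against $\ket{u_i^{A_k}}$ yields a completely separable state, so iterating collapses everything in $A_1\otimes A_2$ to a single product term. (iv) Apply Theorem~\ref{3p} to conclude $\ket{x}$ has a Schmidt decomposition $\sum_{i=1}^{m'} \mu_i \ket{a_i^{A_1}}\ket{a_i^{A_2}}\ket{b_i}$ with $\{\ket{b_i}\}$ orthonormal in $B$ and $m' = \min\{N_{A_1}, N_{A_2}, \dim B\}$. (v) Finally, re-expand each $\ket{b_i}$: since $\ket{x}$ restricted appropriately shows each $\ket{b_i}$ arising in this decomposition is completely separable in $\otimes_{k=3}^n A_k$ (this needs a short argument — that the $A_1A_2$-reduction forces the $B$-components to be product states), apply the inductive hypothesis / Lemma~\ref{csl} to each $\ket{b_i} = \otimes_{k=3}^n \ket{b_i^{A_k}}$ and regroup, obtaining $\ket{x} = \sum_i \mu_i \otimes_{k=1}^n \ket{\cdot}$ with orthonormal sets, possibly after truncating to $m = \min_k N_{A_k}$ terms.

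The main obstacle I anticipate is step (v), and more precisely establishing that the composite vectors $\ket{b_i}$ appearing in the tripartite Schmidt decomposition are themselves completely separable in $\otimes_{k=3}^n A_k$ — Theorem~\ref{3p} only hands us orthonormality in $B$, not product structure inside $B$. The fix is to observe that the hypothesis about the bases of $A_3,\dots,A_n$ is preserved under the orthonormal change of basis producing the $\ket{b_i}$: one shows ${}_{A_k}\!\braket{u_i^{A_k}|x}$ separable for the \emph{original} bases, combined with the tripartite result, forces the new $B$-vectors to lie in the span of product vectors in a controlled way; alternatively, one runs the whole argument symmetrically (grouping \emph{each} pair of adjacent factors) and uses a dimension/rank count on the reduced density operators $\rho_{A_k}$, all of which must be simultaneously diagonalized by the $\ket{u_i^{A_k}}$, to force the global product form. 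A secondary, purely bookkeeping, obstacle is matching the index count $m = \min_k N_{A_k}$ against the tripartite count $\min\{N_{A_1}, N_{A_2}, \dim B\}$ and padding with zero coefficients via Gram--Schmidt as in Lemma~\ref{csl}; this is routine but should be stated carefully.
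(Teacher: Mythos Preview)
Your overall plan---induction on $n$ with a grouping of factors---matches the paper's, and steps (i)--(iv) are essentially sound. In particular, your step (iii) is correct: once you take a single partial inner product ${}_{A_k}\!\braket{u_{i_k}^{A_k}|x}$ and obtain a completely separable state (by hypothesis), every further contraction against a basis vector of some $A_{k'}$ just multiplies by a scalar on each tensor factor, so ${}_B\!\braket{w|x}$ is indeed a product in $A_1\otimes A_2$.

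The gap is precisely at step (v), and your proposed fixes do not close it. After Theorem~\ref{3p} hands you $\ket{x}=\sum_i \mu_i\ket{a_i^{A_1}}\ket{a_i^{A_2}}\ket{b_i}$ with $\{\ket{b_i}\}$ orthonormal in $B=\otimes_{k=3}^n A_k$, you need each $\ket{b_i}$ (with $\mu_i\neq 0$) to be \emph{completely separable}, not merely Schmidt-decomposable. Applying the inductive hypothesis to $\ket{b_i}$ only yields a Schmidt decomposition $\ket{b_i}=\sum_p \nu_p^{(i)}\otimes_{k\ge 3}\ket{w_{p,i}^{A_k}}$; substituting back produces $\sum_{i,p}\mu_i\nu_p^{(i)}\ket{a_i^{A_1}}\ket{a_i^{A_2}}\otimes_{k\ge 3}\ket{w_{p,i}^{A_k}}$, in which the $A_1$- and $A_2$-vectors repeat across $p$, so this is not a Schmidt decomposition unless every $\ket{b_i}$ has Schmidt rank one. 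Your alternative suggestion (a rank count on $\rho_{A_k}$) does not force this either: nothing in the hypothesis equates the ranks of the various one-factor marginals. The paper avoids this difficulty by grouping differently: it merges only the \emph{last two} factors into $B=A_{\tilde n-1}\otimes A_{\tilde n}$ and invokes the $(\tilde n-1)$-partite inductive hypothesis. Then the composite vectors $\ket{j_B}$ live in a \emph{bipartite} space, and the paper gives a concrete argument: using the hypothesis on the bases of $A_{\tilde n-1}$ and $A_{\tilde n}$, one shows that for each basis index $i$ the overlap ${}_{A_{\tilde n}}\!\braket{u_i^{A_{\tilde n}}|j_B}$ is nonzero for at most one $j$ with $\lambda_j\neq 0$ (otherwise ${}_{A_{\tilde n}}\!\braket{u_i^{A_{\tilde n}}|x}$ would have Schmidt rank $\ge 2$ across $(\otimes_{k<\tilde n-1}A_k)\otimes A_{\tilde n-1}$, contradicting complete separability), and likewise for $A_{\tilde n-1}$; from this one extracts that each $\ket{j_B}$ is a product. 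The moral: group two factors rather than $n-2$, so that the ``un-grouping'' step requires only a bipartite splitting, which the separability hypothesis on the individual $A_k$-bases can actually control.
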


\begin{proof}
	We will prove this by induction on the number of parts $n$.\\
	\textbf{Base of induction:} $n=3:$\\
	The claim holds for a tripartite Hilbert space is shown in Theorem[\ref{3p}].\\
	\textbf{Inductive hypothesis:} Let us assume that the claim is true for $(\tilde{n}-1)$-partite Hilbert spaces.\\
	\textbf{Proof for $n=\tilde{n}$:}\\
	Let us look at $\ket{x}$ as a state in an $(\tilde{n}-1)$-partite Hilbert Space by observing the bipartite Hilbert space $A_{\tilde{n}-1}\otimes A_{\tilde{n}}$ as a single finite dimensional Hilbert space $B$. Now by assumption $\ket{x}$ can be written as,
	\begin{equation*}
	\ket{x} = \sum_{i=1}^{N_{A_{\tilde{n}}}}\lambda_{i\tilde{n}}\ket{v_{i\tilde{n}}^{A_1}}\ket{v_{i\tilde{n}}^{A_2}}\dots\big{(}\ket{v_{i\tilde{n}}^{A_{\tilde{n}-1}}}\ket{u_i^{A_{\tilde{n}}}}\big{)},
	\end{equation*}
	where $\lambda_{i\tilde{n}} = ||_{A_{\tilde{n}}}\braket{u_i^{\tilde{n}}|x}||$ and $\ket{v_{i\tilde{n}}^{A_j}}\in A_j$ is normalized for each $j$. Careful observation shows that the set $\{\ket{v_{i\tilde{n}}^{A_{\tilde{n}-1}}}\ket{u_i^{A_{\tilde{n}}}}\}_{i=1}^{N_{A_{\tilde{n}}}}$ is orthonormal in the Hilbert space $B$. Hence it can be extended to an orthonormal basis $\{\ket{\beta_{i}}\}_{i=1}^{(N_{A_{\tilde{n}-1}}N_{A_{\tilde{n}}})}$ of $B$ by Gram-Schmidt orthonormalization. Now let us observe that $\exists$ orthonormal basis $\{\ket{u_i^{A_j}}\}_{i=1}^{N_{A_j}} \subseteq A_j$, $\forall j\in\{1,\dots,\tilde{n}-2\}$, and $\{\ket{\beta_{i}}\}_{i=1}^{(N_{A_{\tilde{n}-1}}N_{A_{\tilde{n}}})}\subseteq B$ such that if $_{A_j}\braket{u_i^{A_j}|x}\neq 0$ $(_{B}\braket{\beta_i|x}\neq 0)$ then $_{A_j}\braket{u_i^{A_j}|x}$ $(_{B}\braket{\beta_i|x})$ is completely separable (Def[\ref{css}]) in $\mathop{\otimes_{k=1}^{\tilde{n}-2}}_{,k\neq j}A_k\otimes B$ $(\otimes_{k=1}^{\tilde{n}-2}A_k)$. By induction hypothesis this implies that $\ket{x}$ has Schmidt decomposition in the Hilbert space $\otimes_{k=1}^{\tilde{n}-2}A_k\otimes B$. Then $\ket{x}$ can be written as
	\begin{equation*}
	\ket{x} = \sum_{i=1}^m \lambda_i\ket{i_{A_1}}\dots\ket{i_{A_{\tilde{n}-2}}}\ket{i_B},\quad(\text{ Def}[\ref{shdn}])
	\end{equation*}
	where $\lambda_i \in \mathbb{C}$ $\forall i$, $m=min\{N_{A_1},\dots,N_{A_{\tilde{n}-2}},N_{A_{\tilde{n}-1}}N_{A_{\tilde{n}}}\}$ and $\{\ket{i_{A_j}}\}_{i=1}^{N_{A_j}}$ $(\{i_B\}_{i=1}^{(N_{A_{\tilde{n}-1}}N_{A_{\tilde{n}}})})$ is an orthonormal set in $A_j$ $(B)$ by Definition[\ref{shdn}] of Schmidt decomposition.
	\par 
	Now we claim that for each $i\in \{1,\dots,N_{A_{\tilde{n}}}\}$, $_{A_{\tilde{n}}}\braket{u_i^{A_{\tilde{n}}}|j_B}\neq 0$ for at most one $j\in\{1,\dots,N_{A_{\tilde{n}-1}}N_{A_{\tilde{n}}}\}$, if $_B\braket{j_B|x}\neq 0$. Suppose not, i.e. for some $i$, $_{A_{\tilde{n}}}\braket{u_i^{A_{\tilde{n}}}|j_B}\neq 0$ for at least 2 distinct values of $j$, let $j_1,j_2$, where $j_1\neq j_2$ and $_B\braket{j_B|x}\neq 0$ for $j=j_1,j_2$. Then we can write,
	\begin{align*}
	_{A_{\tilde{n}}}\braket{u_i^{A_{\tilde{n}}}|x} = &\sum_{j= j_1,j_2}\lambda_j\ket{j_{A_1}}\dots\ket{{j}_{A_{\tilde{n}-2}}}\big{(}_{A_{\tilde{n}}}\braket{u_i^{A_{\tilde{n}}}|{j}_B}\big{)} + \\
	&\sum_{j\neq j_1,j_2}\lambda_j\ket{j_{A_1}}\dots\ket{{j}_{A_{\tilde{n}-2}}}\big{(}_{A_{\tilde{n}}}\braket{u_i^{A_{\tilde{n}}}|{j}_B}\big{)}.
	\end{align*}
	Clearly $\lambda_{j_1},\lambda_{j_2}\neq 0$, since $_B\braket{j_B|x}\neq 0$ for $j=j_1,j_2$. Also $\{_{A_{\tilde{n}}}\braket{u_i^{A_{\tilde{n}}}|{j}_B}\}_j$ is an orthogonal set in $A_{\tilde{n}-1}$ since $\{\ket{j_B}\}_j$ is an orthonormal set in $B$. Now we can see $_{A_{\tilde{n}}}\braket{u_i^{A_{\tilde{n}}}|x}$ as a state in the bipartite system $A\otimes A_{\tilde{n}-1}$, where $A=\otimes_{k=1}^{\tilde{n}-2}A_k$. Then we can notice that,
	\begin{equation*}
	_{A_{\tilde{n}}}\braket{u_i^{A_{\tilde{n}}}|x} = \sum_{j}\lambda_j\ket{j_A}\big{(}_{A_{\tilde{n}}}\braket{u_i^{A_{\tilde{n}}}|{j}_B}\big{)},
	\end{equation*}
	where $\ket{j_A} = \ket{j_{A_1}}\dots\ket{{j}_{A_{\tilde{n}-2}}}$. This	is actually Schmidt decomposition of $_{A_{\tilde{n}}}\braket{u_i^{A_{\tilde{n}}}|x}$ in bipartite system $A\otimes A_{\tilde{n}-1}$ since $\{\ket{j_A}\}_j$ is an orthonormal set in $A$ and $\{_{A_{\tilde{n}}}\braket{u_i^{A_{\tilde{n}}}|{j}_B}\}_j$ is an orthogonal set in $A_{\tilde{n}-1}$. Clearly $_{A_{\tilde{n}}}\braket{u_i^{A_{\tilde{n}}}|x}\in A\otimes A_{\tilde{n}-1}$ has Schmidt number at least 2 since its Schmidt decomposition has at least 2 nonzero terms, for indices $j_1$ and $j_2$. This implies, by Lemma[\ref{sh1iffpro}], that $_{A_{\tilde{n}}}\braket{u_i^{A_{\tilde{n}}}|x}$ is not a separable state in $A\otimes A_{\tilde{n}-1}$, which further implies that it is not completely separable in $\otimes_{k=1}^{\tilde{n}-1}A_k$ and this is clearly a contradiction.
	Thus our claim is proved.\par
	Similarly we can also show that for each $i\in \{1,\dots,N_{A_{\tilde{n}-1}}\}$, $_{A_{\tilde{n}-1}}\braket{u_i^{A_{\tilde{n}-1}}|j_B}\neq 0$ for at most one $j\in\{1,\dots,N_{A_{\tilde{n}-1}}N_{A_{\tilde{n}}}\}$, provided $_B\braket{j_B|x}\neq 0$. \par
	Now let us define relations $R_1,R_2$,
	\begin{eqnarray*}
	R_1\subseteq \{1,\dots,N_{A_{\tilde{n}-1}}\}\times\{1,\dots,N_{A_{\tilde{n}-1}}N_{A_{\tilde{n}}}\}\nonumber\\
	(i,j)\in R_1; \text{ iff } (_{A_{\tilde{n}-1}}\braket{u_i^{A_{\tilde{n}-1}}|j_B}\neq 0 \text{ and } _B\braket{j_B|x}\neq 0),\\
	R_2\subseteq \{1,\dots,N_{A_{\tilde{n}}}\}\times\{1,\dots,N_{A_{\tilde{n}-1}}N_{A_{\tilde{n}}}\}\nonumber\\
	(i,j)\in R_2; \text{ iff } (_{A_{\tilde{n}}}\braket{u_i^{A_{\tilde{n}}}|j_B}\neq 0 \text{ and } _B\braket{j_B|x}\neq 0).
	\end{eqnarray*}
	From previous arguments we can see that for each $j$ such that $_B\braket{j_B|x}\neq 0$ $\exists$ exactly one $i$ for which $(i,j)\in R_1$ $(R_2)$, because, since $\{\ket{u_i^{A_{\tilde{n}-1}}}\}_i\text{ }(\{\ket{u_i^{A_{\tilde{n}}}}\}_i)$ is orthonormal basis for $A_{\tilde{n}-1}\text{ }(A_{\tilde{n}})$ and $\ket{j_B}\in A_{\tilde{n}-1}\otimes A_{\tilde{n}}$, $\exists$ at least one $i$ such that $_{A_{\tilde{n}-1}}\braket{u_i^{A_{\tilde{n}-1}}|j_B}\neq 0\text{ } (_{A_{\tilde{n}}}\braket{u_i^{A_{\tilde{n}}}|j_B}\neq 0)$. Thus neglecting all $i$ for which $\lambda_i=0$, we can write,
	\begin{equation*}
	\ket{x}=\sum_{i}\lambda_i\ket{i_{A_1}}\dots\ket{i_{A_{\tilde{n}-2}}}\ket{u_{R_1^{-1}(i)}^{A_{\tilde{n}-1}}}\ket{u_{R_2^{-1}(i)}^{A_{\tilde{n}}}},
	\end{equation*}
	where $\lambda_{i}\in \mathbb{C}\smallsetminus \{0\} \quad \forall i$.
	This is clearly Schmidt decomposition of $\ket{x}$ seen as a state of the Hilbert space $\otimes_{k=1}^{\tilde{n}}A_k$.\par
	Thus our claim is proved for $n=\tilde{n}$. Hence by induction, the claim holds $\forall n\in\mathbb{N}$, i.e., the Theorem[\ref{np}] is true for any $n$-partite Hilbert space.
	
\end{proof}

\section{Both necessary and sufficient condition for multipartite Hilbert spaces}
Finally we will observe that the sufficient condition is also a necessary condition.
\begin{theorem}\label{mainth}
	Let $A_1,A_2,\dots,A_n$ $(n\geq 3)$ be finite dimensional Hilbert spaces with dimensions $N_{A_1},N_{A_2},\dots,N_{A_n}$ respectively. A nonzero pure state $\ket{x} \in \otimes_{j=1}^n A_j$ has Schmidt decomposition if and only if $\exists$ orthonormal basis $\{\ket{u_i^{A_j}}\}_{i=1}^{N_{A_j}} \subseteq A_j$ for each $j$ such that, if $_{A_j}\braket{u_i^{A_j}|x}\neq 0$ then $_{A_j}\braket{u_i^{A_j}|x}$ is completely separable in $\otimes_{k\neq j}A_k$.
\end{theorem}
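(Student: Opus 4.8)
Since the ``if'' direction is precisely the content of Theorem~\ref{np}, the only thing left to establish is necessity, and the plan is to run the argument of Lemma~\ref{csl} essentially in reverse: start from an arbitrary Schmidt decomposition of $\ket{x}$, complete the Schmidt bases of the individual factors to full orthonormal bases by Gram--Schmidt, and then simply read off the partial inner products against those completed bases.

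Concretely, I would begin by invoking Definition~\ref{shdn} to write
\begin{equation*}
\ket{x}=\sum_{i=1}^{m}\lambda_i\,\otimes_{j=1}^{n}\ket{u_i^{A_j}},\qquad \lambda_i\in\mathbb{C},
\end{equation*}
where $m=\min\{N_{A_j}\mid j=1,\dots,n\}$ and $\{\ket{u_i^{A_j}}\}_{i=1}^{m}$ is an orthonormal set in $A_j$ for each $j$. Since $m\le N_{A_j}$ for every $j$, I would then extend each $\{\ket{u_i^{A_j}}\}_{i=1}^{m}$, by the Gram--Schmidt process, to an orthonormal basis $\{\ket{u_i^{A_j}}\}_{i=1}^{N_{A_j}}$ of $A_j$, leaving the first $m$ vectors unchanged. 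These completed bases are the ones I would propose as the witnesses required by the statement.

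The final step is the computation of the partial inner products (Def~\ref{pip}) against this basis. For a fixed $j$ and $1\le i\le N_{A_j}$,
\begin{equation*}
_{A_j}\braket{u_i^{A_j}|x}=\sum_{i'=1}^{m}\lambda_{i'}\braket{u_i^{A_j}|u_{i'}^{A_j}}\,\otimes_{k\neq j}\ket{u_{i'}^{A_k}}.
\end{equation*}
If $i>m$, then $\braket{u_i^{A_j}|u_{i'}^{A_j}}=0$ for every $i'\le m$ because the extended basis is orthonormal, so $_{A_j}\braket{u_i^{A_j}|x}=0$. If $i\le m$, then $\braket{u_i^{A_j}|u_{i'}^{A_j}}=\delta_{ii'}$, so $_{A_j}\braket{u_i^{A_j}|x}=\lambda_i\,\otimes_{k\neq j}\ket{u_i^{A_k}}$, which is a product vector and hence completely separable in $\otimes_{k\neq j}A_k$ in the sense of Definition~\ref{css} (the scalar $\lambda_i$ may be absorbed into any one of the tensor factors, exactly as in the $(\Rightarrow)$ part of Lemma~\ref{sh1iffpro}). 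Thus whenever $_{A_j}\braket{u_i^{A_j}|x}\neq 0$ it is completely separable, which is exactly the required condition; combined with Theorem~\ref{np} this yields the equivalence.

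I do not expect any genuine obstacle in this direction: the substance of the theorem lies entirely in the sufficiency part (Theorem~\ref{np}, resting on Lemmas~\ref{sh1iffpro} and~\ref{cs2} and the inductive reductions), while necessity is just the bookkeeping above. The only mild points to watch are that $m\le N_{A_j}$ is what guarantees the Gram--Schmidt extension exists, that the original $m$ Schmidt vectors survive as the first basis elements after the extension, and that the definition of complete separability permits the Schmidt coefficient to sit inside one of the tensor factors; each of these is immediate.
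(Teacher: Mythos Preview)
Your proposal is correct and follows essentially the same approach as the paper: invoke Theorem~\ref{np} for sufficiency, and for necessity write out the Schmidt decomposition from Definition~\ref{shdn}, extend each orthonormal set to a full basis by Gram--Schmidt, and verify that the partial inner products against those bases are either zero or product vectors. The paper's argument is slightly more terse but structurally identical, including the same case split on $i\le m$ versus $i>m$.
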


\begin{proof}
	$(\Rightarrow)$\\
	If $\exists$ orthonormal basis $\{\ket{u_i^{A_j}}\}_{i=1}^{N_{A_j}} \subseteq A_j$ for each $j$ such that, if $_{A_j}\braket{u_i^{A_j}|x}\neq 0$ then $_{A_j}\braket{u_i^{A_j}|x}$ is completely separable in $\otimes_{k\neq j}A_k$; then $\ket{x}$ has Schmidt decomposition. This has been proved in Theorem[\ref{np}].\par
	$(\Leftarrow)$
	Conversely, if $\ket{x}$ has Schmidt decomposition then it can be written as,
	\begin{equation*}
	\ket{x} = \sum_{i=1}^m \lambda_i \otimes_{j=1}^n\ket{i_{A_j}},
	\end{equation*}
	where $\lambda_i \in \mathbb{C} \quad \forall i$, $m=min\{N_{A_1},\dots,N_{A_n}\}$ and $\{\ket{i_{A_j}}\}_{i=1}^m$ is an orthonormal set in $A_j$, $\forall j\in\{1,\dots,n\}$. For each $A_j$ then, we can extend the set $\{\ket{i_{A_j}}\}_{i=1}^m$ to an orthonormal basis $\{\ket{i_{A_j}}\}_{i=1}^{N_{A_j}}$ by Gram-Schmidt orthonormalization. Clearly,
	$$
	_{A_j}\braket{i_{A_j}|x}=
	\begin{cases}
	\lambda_i \otimes_{k\neq j}\ket{i_{A_k}}; \text{ if }j\leq m,\\
	0; \text{ if }m<j\leq N_{A_j},
	\end{cases}
	$$
	i.e., if $_{A_j}\braket{i_{A_j}|x}\neq 0$ then $_{A_j}\braket{i_{A_j}|x}$ is completely separable in $\otimes_{k\neq j}A_k$. Thus the converse is also proved.\par
	Hence the proof of the theorem is completed.
	
\end{proof}

A simple but important corollary of the above Theorem[\ref{mainth}] is given below. The importance of this corollary is that, it has used to disprove the Proposition[\ref{wrth}] suggested by Dr. A. K. Pati \cite{1}.
\begin{corollary}\label{psc}
	Let $A_1,A_2,\dots,A_n$ $(n\geq 3)$ be finite dimensional Hilbert spaces with dimensions $N_{A_1},N_{A_2},\dots,N_{A_n}$ respectively. If a nonzero pure state $\ket{x}\in \otimes_{j=1}^n A_j$ is partially separable (Def[\ref{pss}]) but not completely separable (Def[\ref{css}]) then $\ket{x}$ does not have Schmidt decomposition.
\end{corollary}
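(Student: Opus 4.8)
The plan is to prove the contrapositive of the corollary: I will show that if $\ket{x}$ is partially separable (Def[\ref{pss}]) \emph{and} possesses a Schmidt decomposition (Def[\ref{shdn}]), then $\ket{x}$ is necessarily completely separable (Def[\ref{css}]). Negating this implication gives exactly the statement of Corollary[\ref{psc}]. The whole argument rests on the observation that an $n$-partite Schmidt decomposition, when its factors are regrouped across a bipartite cut, becomes an honest bipartite Schmidt decomposition, at which point Lemma[\ref{sh1iffpro}] and the uniqueness of Schmidt coefficients (Def[\ref{shdb}]) pin the state down.

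So suppose $\ket{x} = \ket{\psi}_{A_1,\dots,A_{k-1}} \otimes \ket{\phi}_{A_k,\dots,A_n}$ for some $k\in\{2,\dots,n\}$, and simultaneously $\ket{x} = \sum_{i=1}^m \lambda_i \otimes_{j=1}^n \ket{i_{A_j}}$ with $m=\min\{N_{A_j}\}$ and each $\{\ket{i_{A_j}}\}_{i=1}^m$ orthonormal in $A_j$. Grouping the tensor factors across the cut, set $B=\otimes_{j=1}^{k-1}A_j$, $C=\otimes_{j=k}^n A_j$, $\ket{b_i}=\otimes_{j=1}^{k-1}\ket{i_{A_j}}\in B$ and $\ket{c_i}=\otimes_{j=k}^n\ket{i_{A_j}}\in C$. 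A one-line computation using the inner product of Def[\ref{nph}] shows that a tensor product of orthonormal families taken from distinct factors is again orthonormal, so $\{\ket{b_i}\}_{i=1}^m$ and $\{\ket{c_i}\}_{i=1}^m$ are orthonormal in $B$ and $C$, and $\ket{x}=\sum_{i=1}^m\lambda_i\ket{b_i}\ket{c_i}$ is a Schmidt decomposition of $\ket{x}$ viewed as a pure state of the bipartite space $B\otimes C$.

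On the other hand, in this same bipartite picture $\ket{x}=\ket{\psi}_B\otimes\ket{\phi}_C$ is a product state, hence completely separable as a bipartite state, hence has Schmidt number $1$ by Lemma[\ref{sh1iffpro}]. By the uniqueness of the multiset of Schmidt coefficients asserted in Def[\ref{shdb}] — padding the shorter decomposition with zeros if the two lengths differ — exactly one coefficient $\lambda_{i_0}$ is nonzero. Therefore $\ket{x}=\lambda_{i_0}\,\otimes_{j=1}^n\ket{i_{0,A_j}}$, which is a product of $n$ single-system vectors and thus completely separable (cf. the argument in Lemma[\ref{csl}]). This proves the implication, and hence the corollary. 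The same conclusion can alternatively be obtained by feeding the bipartite-product observation into the necessity direction of Theorem[\ref{mainth}].

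I expect the only point needing genuine care to be the bookkeeping in the uniqueness step: the $n$-partite Schmidt decomposition has $m=\min_j N_{A_j}$ terms while the bipartite one across the cut has $\min\{\prod_{j<k}N_{A_j},\prod_{j\ge k}N_{A_j}\}$ terms, and these counts generally differ, so one must phrase the uniqueness clause in terms of the Schmidt number (the number of nonzero coefficients) rather than the literal list. This is a cosmetic issue rather than a real obstacle; the substance of the corollary is just that a state which is both a product across one cut and Schmidt-decomposable across all single-system cuts at once can only be the trivial rank-one (completely separable) case.
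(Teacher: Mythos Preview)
Your proof is correct and takes a genuinely different route from the paper's. The paper argues by induction on $n$ and invokes Theorem[\ref{mainth}] at each step: in the base case $n=3$ it shows that for \emph{every} orthonormal basis of the separated factor the partial inner products remain non-separable, so the necessary condition of Theorem[\ref{mainth}] fails; the inductive step collapses two factors into one and splits into cases. Your argument is more direct: you regroup the $n$-partite Schmidt decomposition across the given bipartite cut, observe that this yields a bipartite Schmidt decomposition with orthonormal families on each side, and then use Lemma[\ref{sh1iffpro}] together with the uniqueness of the Schmidt multiset (Def[\ref{shdb}]) to force all but one coefficient to vanish. This avoids both the induction and any appeal to Theorem[\ref{mainth}], making the corollary logically independent of the heavy machinery of the paper. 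The only cosmetic wrinkle you already flagged---that Def[\ref{shdn}] allows $\lambda_i\in\mathbb{C}$ while Def[\ref{shdb}] states uniqueness for nonnegative coefficients---is handled by absorbing phases into the $\ket{b_i}$ before invoking uniqueness; the padding-with-zeros issue is likewise harmless since the Schmidt number is what is actually unique.
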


\begin{proof}
	We will prove this by induction on the number of parts $n$.\\
	\textbf{Base of induction:} $n=3:$\\
	As $\ket{x}$ is partially separable but not completely separable, it can be written (without loss of generality) as
	\begin{equation*}
	\ket{x} = \ket{\psi^{A_1A_2}}\ket{\phi^{A_3}},
	\end{equation*}
	where $\ket{\psi^{A_1A_2}}\in A_1\otimes A_2$ is not separable and $\ket{\phi^{A_3}}\in A_3$. For any orthonormal basis $\{\ket{i_{A_3}}\}_{i=1}^{N_{A_3}}\subseteq A_3$,
	\begin{equation*}
	_{A_3}\braket{i_{A_3}|x} = \braket{i_{A_3}|\phi^{A_3}}\ket{\psi^{A_1A_2}}
	\end{equation*}
	is clearly not separable in $A_1\otimes A_2$, since $\ket{\psi^{A_1A_2}}$ is not separable in $A_1\otimes A_2$. Thus by Theorem[\ref{mainth}] $\ket{x}$ does not have Schmidt decomposition. Hence the claim holds for $n=3$.\\
	\textbf{Inductive hypothesis:} Let us assume that the claim is true $\forall n\leq\tilde{n}-1$.\\
	\textbf{Proof for $n=\tilde{n}$:}\\
	Let us view $\ket{x}$ as a state in an $(\tilde{n}-1)$-partite Hilbert space by assuming the bipartite Hilbert space $A_{\tilde{n}-1}\otimes A_{\tilde{n}}$ to be a single Hilbert space $B$. Now given $\ket{x}$ is partially separable but not completely separable, two cases may arise.
	\begin{enumerate}
		\item \textbf{Case I:} $\ket{x}$ is partially separable as a state in $\otimes_{j=1}^{\tilde{n}-2}A_j\otimes B$.\\
		Then by inductive hypothesis we can say that $\ket{x}$ does not have Schmidt decomposition. Thus the claim is proved for Case-I.
		
		\item \textbf{Case II:} $\ket{x}$ is not partially separable as a state in $\otimes_{j=1}^{\tilde{n}-2}A_j\otimes B$.\\
		Then $\ket{x}$ must be of the form
		\begin{equation*}
		\ket{x} = \ket{\psi^{A_1A_2\dots A_{\tilde{n}-1}}}\ket{\phi^{A_{\tilde{n}}}},
		\end{equation*}
		where $\ket{\psi^{A_1A_2\dots A_{\tilde{n}-1}}}\in\otimes_{j=1}^{\tilde{n}-1}A_j$ is not partially separable, because otherwise, $\ket{x}$ will be partially separable as a state of $\otimes_{j=1}^{\tilde{n}-2}A_j\otimes B$. Also, $\ket{\phi^{A_{\tilde{n}}}}\in A_{\tilde{n}}$. Now for any orthonormal basis $\{\ket{i_{A_{\tilde{n}}}}\}_{i=1}^{N_{A_{\tilde{n}}}}$ of $A_{\tilde{n}}$, we can write
		\begin{equation*}
		_{A_{\tilde{n}}}\braket{i_{A_{\tilde{n}}}|x} = \braket{i_{A_{\tilde{n}}}|\phi^{A_{\tilde{n}}}}\ket{\psi^{A_1A_2\dots A_{\tilde{n}-1}}}.
		\end{equation*}
		But this means, $_{A_{\tilde{n}}}\braket{i_{A_{\tilde{n}}}|x}$ not completely separable, since $\ket{\psi^{A_1A_2\dots A_{\tilde{n}-1}}} \in \otimes_{j=1}^{\tilde{n}-1}A_j$ is not completely separable. Hence by Theorem[\ref{mainth}] $\ket{x}$ does not have Schmidt decomposition. Thus the claim is proved for Case-II.
	\end{enumerate}
	Hence the claim is proved for $n=\tilde{n}$. By induction we can say that the claim is true $\forall n\in\mathbb{N}$. This completes the proof of the Corollary[\ref{psc}].
	
\end{proof}

\subsection{Applicability of the theorem for a bipartite Hilbert space}
As a sanity check, we show here how Theorem[\ref{mainth}] holds for bipartite systems as well. We will show that Theorem[\ref{mainth}] actually implies that every pure state in a bipartite Hilbert space is actually Schmidt decomposable.

\begin{corollary}\label{support}
	Every pure state in a bipartite Hilbert space has a Schmidt decomposition.
\end{corollary}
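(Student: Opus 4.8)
The plan is to deduce Corollary~\ref{support} from Theorem~\ref{mainth} by checking that, in the two-party case, the hypothesis of that theorem is met by \emph{every} pure state, and then to supply the one step the induction of Theorems~\ref{3p}--\ref{np} does not cover, namely $n=2$ itself. The first observation is that for a bipartite space $A\otimes B$ and \emph{any} orthonormal basis $\{\ket{u_i^A}\}_{i=1}^{N_A}$ of $A$, each nonzero partial inner product ${}_A\braket{u_i^A|x}$ is a vector of the single Hilbert space $B$, and a vector of a single Hilbert space is completely separable in the degenerate sense of Definition~\ref{css} (a product over one tensor factor is just the vector itself); the same holds with $A$ and $B$ interchanged. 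Thus the hypothesis of Theorem~\ref{mainth}, read for $n=2$, holds for every bipartite pure state, and its conclusion is precisely the statement of Definition~\ref{shdb}. It remains to establish the $n=2$ case directly.

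For that I would use the reduced density operator. Form $\rho_A:=\mathrm{tr}_B(\ket{x}\bra{x})$, which is Hermitian and positive semidefinite, and by the spectral theorem take an orthonormal eigenbasis $\{\ket{i_A}\}_{i=1}^{N_A}$ of $A$ with eigenvalues $p_i\ge 0$. Writing $\ket{\alpha_i}:={}_A\braket{i_A|x}\in B$, the key step is the identity $\braket{\alpha_i|\alpha_j}=p_i\,\delta_{ij}$, which I would verify by expanding $\ket{x}$ in the basis $\{\ket{i_A}\}$ and writing both sides through its coefficient matrix, using that $\rho_A$ is diagonal there. Hence $\|\alpha_i\|=\sqrt{p_i}$, so for each $i$ with $p_i>0$ one may set $\ket{i_B}:=\ket{\alpha_i}/\sqrt{p_i}$, a unit vector of $B$, and the resulting vectors $\ket{i_B}$ form an orthonormal set; this is exactly the picture Theorem~\ref{mainth} predicts for $n=2$, with the chosen basis of $A$ making all partial inner products completely separable in $B$.

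Finally I would reassemble $\ket{x}=\sum_i \ket{i_A}\otimes\ket{\alpha_i}=\sum_{i:\,p_i>0}\sqrt{p_i}\,\ket{i_A}\ket{i_B}$. Since $\mathrm{rank}(\rho_A)\le\min\{N_A,N_B\}$, at most $m=\min\{N_A,N_B\}$ terms survive; keeping $m$ of the $\ket{i_A}$ and extending $\{\ket{i_B}\}$ by Gram--Schmidt to an orthonormal $m$-set puts $\ket{x}$ in the form of Definition~\ref{shdb}, while uniqueness of the multiset $\{\sqrt{p_i}\}$ follows because the $p_i$ are the eigenvalues of $\rho_A$ and so are independent of any choice of bases. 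I expect the only real work to be the orthogonality identity $\braket{\alpha_i|\alpha_j}=p_i\delta_{ij}$ and the bookkeeping that caps the number of nonzero terms at $\min\{N_A,N_B\}$ when $N_A\ne N_B$; everything else is routine.
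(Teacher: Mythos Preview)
Your first paragraph is exactly the paper's own argument: pick any orthonormal bases of $A$ and $B$, observe that every partial inner product lands in a single factor and is therefore trivially completely separable, and invoke Theorem~\ref{mainth}. The paper stops there.

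You then go further than the paper does. You notice that Theorem~\ref{mainth} (via Theorems~\ref{3p} and~\ref{np}) is stated and proved only for $n\ge 3$, so quoting it at $n=2$ is formally a gap, and you fill it with the standard reduced-density-matrix argument: diagonalize $\rho_A$, read off the $\ket{\alpha_i}$, verify $\braket{\alpha_i|\alpha_j}=p_i\delta_{ij}$, and assemble the decomposition. That direct argument is correct and is the usual textbook proof. The paper, by contrast, treats the corollary purely as a sanity check and simply applies Theorem~\ref{mainth} to the bipartite case without supplying a separate $n=2$ base; your version is more self-contained, at the cost of importing the spectral theorem where the paper's one-line appeal to Theorem~\ref{mainth} does not.
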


\begin{proof}
	Let $\ket{x}\in A\otimes B$ be a pure state, where $A$ and $B$ are finite dimensional Hilbert spaces with dimensions $N_A$ and $N_B$ respectively. Let $\{\ket{i_A}\}_{i=1}^{N_A}$ be an orthonormal basis in $A$ and $\{\ket{i_B}\}_{i=1}^{N_B}$ be an orthonormal basis in $B$. Then $\ket{x}$ can be represented in terms of $\{\ket{i_A}\}_{i=1}^{N_A}$ and $\{\ket{i_B}\}_{i=1}^{N_B}$ as,
	\begin{equation*}
	\ket{x} = \sum_{i=1}^{N_A}\sum_{j=1}^{N_B} \lambda_{ij}\ket{i_A}\ket{j_B},
	\end{equation*}
	where $\lambda_{ij}\in \mathbb{C}$ $\forall i,j$. Now we can easily observe that,
	\begin{align*}
	&_A\braket{i_A|x} = \sum_{j=1}^{N_B} \lambda_{ij}\ket{j_B} \quad \forall i\in\{1,\dots,N_A\},\\
	&_B\braket{j_B|x} = \sum_{i=1}^{N_A} \lambda_{ij}\ket{i_A} \quad \forall j\in\{1,\dots,N_B\}.
	\end{align*}
	But both $\sum_{j=1}^{N_B} \lambda_{ij}\ket{j_B}$ and $\sum_{i=1}^{N_A} \lambda_{ij}\ket{i_A}$ are completely separable states (if not $0$), since each of them belong to a single Hilbert space, $\sum_{j=1}^{N_B} \lambda_{ij}\ket{j_B}\in B$ and $\sum_{i=1}^{N_A} \lambda_{ij}\ket{i_A}\in A$. Thus by Theorem[\ref{mainth}], $\ket{x}$ has Schmidt decomposition. Since $\ket{x}$ was chosen arbitrarily, the result is true for any pure state $\ket{x}\in A\otimes B$ in general.
	
\end{proof}
We already know that any pure state in a bipartite Hilbert space has a Schmidt decomposition (Def[\ref{shdb}]). We see that Corollary[\ref{support}] is actually supporting that result. Thus Theorem[\ref{mainth}] remains valid for bipartite systems as well.

\appendix
\section{Detailed Proof of Theorem[\ref{3p}]}
\label{app1}
\begin{proof}
	We have for each $j\in \{1,2,3\}$,
	$$
	_{A_j}\braket{u_i^{A_j}|x}=
	\begin{cases}
	\lambda_{ij} \otimes_{k\neq j}\ket{v_{ij}^{A_k}}; \text{ if } _{A_j}\braket{u_i^{A_j}|x}\neq 0,\\
	0; \text{ otherwise,}
	\end{cases}
	$$
	where $\lambda_{ij} = ||_{A_j}\braket{u_i^{A_j}|x}||$ and $0 \neq \ket{v_{ij}^{A_k}} \in A_k$ is normalized for each $k \neq j$.
	\par
	Let $m_j := |\{i;\text{ }_{A_j}\braket{u_i^{A_j}|x}\neq 0\}|$ be defined $\forall j\in\{1,2,3\}$. We also define,
	\begin{equation*}
	m := max\{m_1,m_2,m_3\}.
	\end{equation*}
	If we can prove that the theorem is true $\forall m\in \mathbb{N}$ then we are done. We will proceed by induction on $m$.\\
	\textbf{Base of induction:} $m=1$.\\
	Then $|\{i;\text{ }_{A_j}\braket{u_i^{A_j}|x}\neq 0\}| = 1$ $\forall j\in\{1,2,3\}$, since $|\{i;\text{ }_{A_j}\braket{u_i^{A_j}|x}\neq 0\}|\neq0$ as $\ket{x}\neq 0$. Then $\ket{x}$ must be of the form,
	\begin{equation*}
	\ket{x} = \lambda\otimes_{j=1}^3 \ket{u_{i_j}^{A_j}},
	\end{equation*}
	where $\lambda = \lambda_{i_11} = \lambda_{i_22} = \lambda_{i_33}$ and $i_j\in\{i;\text{ }_{A_j}\braket{u_i^{A_j}|x}\neq 0\}$ $\forall j\in\{1,2,3\}$. This is because, for each $j$,
	\begin{align*}
	& _{A_j}\braket{u_i^{A_j}|x} = 0 \quad \forall i\neq i_j\\
	\implies& \ket{v_{i_jj}^{A_k}} = \ket{u_{i_k}^{A_k}} \quad \forall k\neq j
	\end{align*}
	Thus $\ket{x}$ is completely separable (Def[\ref{css}]). Hence by Lemma[\ref{csl}] $\ket{x}$ has Schmidt decomposition.\\
	\textbf{Inductive hypothesis:} Let us assume that the theorem is true $\forall m\leq \tilde{m}-1$.\\
	\textbf{Proof for $m=\tilde{m}$:} Two cases may arise.
	\begin{enumerate}
		\item \textbf{Case I:} $m_1=m_2=m_3 = \tilde{m}$.\\
		Then $\ket{x}$ can be written as,
		\begin{align*}
		\ket{x} &= \sum_{i\in\{a;_{A_1}\braket{u_a^{A_1}|x}\neq0\}} \lambda_{i1}\ket{u_i^{A_1}}\ket{v_{i1}^{A_2}}\ket{v_{i1}^{A_3}}\\
		&= \sum_{i\in\{a;_{A_2}\braket{u_a^{A_2}|x}\neq0\}} \lambda_{i2}\ket{v_{i2}^{A_1}}\ket{u_i^{A_2}}\ket{v_{i2}^{A_3}}\\
		&= \sum_{i\in\{a;_{A_3}\braket{u_a^{A_3}|x}\neq0\}} \lambda_{i3}\ket{v_{i3}^{A_1}}\ket{v_{i3}^{A_2}}\ket{u_i^{A_3}}.
		\end{align*}
		Now two subcases may arise,
		\begin{enumerate}
			\item \textbf{Subcase I.A:} For some $j\in \{1,2,3\}$ $\exists$ a 1-1 correspondence between the sets $\{\ket{v_{ij}^{A_k}};\text{ }i\in\{a;_{A_j}\braket{u_a^{A_j}|x}\neq0\}\}$ and $\{\ket{u_i^{A_k}};\text{ }i\in\{a;_{A_k}\braket{u_a^{A_k}|x}\neq0\}\}$ $\forall k\neq j$.\\
			In other words, $\exists$ a one-one map $f^k: \{a;_{A_j}\braket{u_a^{A_j}|x}\neq0\} \longrightarrow\{a;_{A_k}\braket{u_a^{A_k}|x}\neq0\}$ $\forall k\neq j$ such that,
			\begin{equation*}
			\ket{v_{ij}^{A_k}} = \ket{u_{f^k(i)}^{A_k}}.
			\end{equation*}
			Then clearly we can write,
			\begin{align*}
			\ket{x} &= \sum_{i\in \{a;_{A_j}\braket{u_a^{A_j}|x}\neq0\}} \lambda_{ij}\ket{u_i^{A_j}}\otimes_{k\neq j}\ket{v_{ij}^{A_k}}\\
			&= \sum_{i\in \{a;_{A_j}\braket{u_a^{A_j}|x}\neq0\}} \lambda_{ij}\ket{u_i^{A_j}}\otimes_{k\neq j}\ket{u_{f^k(i)}^{A_k}}.
			\end{align*}
			Now this is a Schmidt decomposition of $\ket{x}$ by Def[\ref{shdn}] since $\{u_{f^k(i)}^{A_k}\}$ is an orthonormal set in $A_k$ for each $k\neq j$. This is because $f^k$ is one-one and $i_1\neq i_2$ implies $f^k(i_1)\neq f^k(i_2)$. Thus $\ket{x}$ has Schmidt decomposition and our claim holds for Subcase-I.A.
			
			\item \textbf{Subcase I.B:} $\nexists j\in\{1,2,3\}$ such that a 1-1 correspondence exists between the sets $\{\ket{v_{ij}^{A_k}};\text{ }i\in\{a;_{A_j}\braket{u_a^{A_j}|x}\neq0\}\}$ and $\{\ket{u_i^{A_k}};\text{ }i\in\{a;_{A_k}\braket{u_a^{A_k}|x}\neq0\}\}$ $\forall k\neq j$.\\
			Then for each $j$ $\exists$ at least one $k\neq j$ and at least one $i\in \{a;_{A_j}\braket{u_a^{A_j}|x}\neq0\}$ such that $\ket{v_{ij}^{A_k}}$ has nonzero inner product with $\ket{u_{i^\prime}^{A_k}}$ for at least 2 distinct values of $i^\prime \in \{a;_{A_k}\braket{u_a^{A_k}|x}\neq0\}$. Since $|\{a;_{A_j}\braket{u_a^{A_j}|x}\neq0\}| = |\{a;_{A_k}\braket{u_a^{A_k}|x}\neq0\}|$, by pigeon hole principle we can then say that $\exists$ $t \in \{a;_{A_k}\braket{u_a^{A_k}|x}\neq0\}$ such that $\braket{u_t^{A_k}|v_{ij}^{A_k}}\neq 0$ for at least 2 distinct values of $i\in \{a;_{A_j}\braket{u_a^{A_j}|x}\neq0\}$. This implies by Lemma[\ref{cs2}] that for each $A_j$ $\exists$ orthonormal basis $\{\ket{\tilde{u}_i^{A_j}}\}_{i=1}^{N_{A_j}}$ such that,
			$$
			_{A_j}\braket{\tilde{u}_i^{A_j}|x}=
			\begin{cases}
			\tilde{\lambda}_{ij}\otimes_{k\neq j}\ket{\tilde{v}_{ij}^{A_k}}; \text{ if } _{A_j}\braket{\tilde{u}_i^{A_j}|x}\neq0,\\
			0; \text{ otherwise,}
			\end{cases}
			$$
			and
			\begin{equation*}
			|\{i;_{A_j}\braket{\tilde{u}_i^{A_j}|x}\neq 0\}| = |\{i;_{A_j}\braket{u_i^{A_j}|x}\neq 0\}|-1
			\end{equation*}
			In other words, for each $j$ we can find a new orthonormal basis $\{\ket{\tilde{u}_i^{A_j}}\}_{i=1}^{N_{A_j}}$ for $A_j$ which preserves the condition stated in the Theorem[\ref{3p}] but $\ket{x}$ has partial inner product zero with one more component than $\{\ket{u_i}^{A_j}\}_{i=1}^{N_{A_j}}$.
			Hence, for the given state $\ket{x}\in \otimes_{j=1}^3 A_j$ we can find orthonormal basis $\{\ket{\tilde{u}_i^{A_j}}\}_{i=1}^{N_{A_j}}\subseteq A_j$ $\forall j$ such that,
			$$
			_{A_j}\braket{\tilde{u}_i^{A_j}|x}=
			\begin{cases}
			\tilde{\lambda}_{ij}\otimes_{k\neq j}\ket{\tilde{v}_{ij}^{A_k}}; \text{ if } _{A_j}\braket{\tilde{u}_i^{A_j}|x}\neq0,\\
			0; \text{ otherwise,}
			\end{cases}
			$$
			and
			\begin{equation*}
			|\{i;_{A_j}\braket{\tilde{u}_i^{A_j}|x}\neq 0\}| = m_j-1,
			\end{equation*}
			i.e., $m=max\{m_1-1,m_2-1,m_3-1\}=\tilde{m}-1$. Using induction hypothesis we can say that $\ket{x}$ has Schmidt decomposition. Thus the claim holds for Subcase-I.B.
		\end{enumerate}
		Thus we have shown that the theorem is true for Case-I.
		
		\item \textbf{Case II:} $m_1,m_2,m_3$ are not all equal.\\
		Without loss of generality, let $m_1 = max\{m_1,m_2,m_3\} = \tilde{m}$ and $m_3<m_1$. By assumption we can write,
		\begin{equation*}
		\ket{x}=\sum_{i\in \{a;_{A_1}\braket{u_a^{A_1}|x}\neq 0\}} \lambda_{i1}\ket{u_i^{A_1}}\ket{v_{i1}^{A_2}}\ket{v_{i1}^{A_3}}.
		\end{equation*}
		This implies $|\{\ket{v_{i1}^{A_3}};\text{ }i\in \{a;_{A_1}\braket{u_a^{A_1}|x}\neq 0\}\}| = \tilde{m} > m_3$. But we know that,
		\begin{equation*}
		|\{a;_{A_3}\braket{u_a^{A_3}|x}\neq 0\}| = m_3.
		\end{equation*}
		This implies, by Pigeonhole principle, $\exists$ at least one $t \in \{a;_{A_3}\braket{u_a^{A_3}|x}\neq 0\}$ such that $\braket{u_t^{A_3}|v_{i1}^{A_3}}\neq 0$ for at least 2 distinct values of $i\in \{a;_{A_1}\braket{u_a^{A_1}|x}\neq 0\}$. Using Lemma[\ref{cs2}] we get that $\exists$ orthonormal basis $\{\ket{\tilde{u}_i^{A_1}}\}_{i=1}^{N_{A_1}} \subseteq A_1$ such that
		$$
		_{A_1}\braket{u_i^{A_1}|x} = 
		\begin{cases}
		\tilde{\lambda}_{i1}\ket{\tilde{v}_{i1}^{A_2}}\ket{\tilde{v}_{i1}^{A_3}}; \text{ if } _{A_1}\braket{u_i^{A_1}|x}\neq 0,\\
		0; \text{ otherwise,}
		\end{cases}
		$$
		and
		\begin{equation*}
		|\{i;_{A_1}\braket{\tilde{u}_i^{A_1}|x}\neq 0\}| = m_1-1.
		\end{equation*}
		If $m_2 = max\{m_1,m_2,m_3\} = \tilde{m}$ as well then we can prove similarly that $\exists$ orthonormal basis $\{\ket{\tilde{u}_i^{A_2}}\}_{i=1}^{N_{A_2}} \subseteq A_2$ such that,
		$$
		_{A_2}\braket{u_i^{A_2}|x} = 
		\begin{cases}
		\tilde{\lambda}_{i2}\ket{\tilde{v}_{i2}^{A_1}}\ket{\tilde{v}_{i2}^{A_3}}; \text{ if }_{A_2}\braket{u_i^{A_2}|x}\neq 0,\\
		0; \text{ otherwise,}
		\end{cases}
		$$
		and
		\begin{equation*}
		|\{i;_{A_2}\braket{\tilde{u}_i^{A_2}|x}\neq 0\}| = m_2-1.
		\end{equation*}
		Thus we get that actually for the given state $\ket{x}$ in the Hilbert space $A_1\otimes A_2\otimes A_3$ the value $m=max\{m_1-1,m_2-1,m_3\}=\tilde{m}-1$. Hence by induction hypothesis $\ket{x}$ has Schmidt decomposition. Thus we have proven the Theorem[\ref{3p}] for Case-II.
	\end{enumerate}
	Thus the Theorem[\ref{3p}] is proven to be true for $m=\tilde{m}$ exhaustively by accounting all possible cases. Hence the Theorem[\ref{3p}] is true $\forall m
	\in \mathbb{N}$ by induction. This completes the proof.
	
\end{proof}

\end{document}